 \definecolor{BLACK}{gray}{0}
 \definecolor{WHITE}{gray}{1}
 \definecolor{RED}{rgb}{1,0,0}
 \definecolor{GREEN}{rgb}{0,1,0}
 \definecolor{BLUE}{rgb}{0,0,1}
 \definecolor{CYAN}{cmyk}{1,0,0,0}
 \definecolor{MAGENTA}{cmyk}{0,1,0,0}
 \definecolor{YELLOW}{cmyk}{0,0,1,0}
\newenvironment{protocol*}[1]
  {
    \begin{center}
      \hrulefill\\
      \textbf{#1}
  }
  {
    \vspace{-1\baselineskip}
    \hrulefill
    \end{center}
  }
\newtheorem{thm}{Theorem}
\theoremstyle{definition}
\newtheorem{lemma}{Lemma}  
\def\bel{\begin{lemma}}
\def\eel{\end{lemma}}
\newtheorem{theorem}{Theorem}
\newtheorem*{proposition*}{Proposition}
\newtheorem{lem}[theorem]{Lemma}
\def\>{\rangle}
\def\<{\langle}
\begin{document}
\title{Quantum Nonlocality and Device-Independent Randomness Robust to Relaxations of Bell Assumptions}

\author{Ravishankar Ramanathan}
\email{ravi@cs.hku.hk}
\affiliation{Department of Computer Science, School of Computing and Data Science, The University of Hong Kong, Pokfulam Road, Hong Kong, China}
\author{Yuan Liu}
\email{yuan59@connect.hku.hk}
\affiliation{Department of Computer Science, School of Computing and Data Science, The University of Hong Kong, Pokfulam Road, Hong Kong, China}


\begin{abstract}
The question of certifying quantum nonlocality under a relaxation of the assumptions in the Bell theorem has gained traction, with potential for device-independent applications under weak seeds and cross-talk. Recently, it was shown that quantum nonlocality can be certified even under a simultaneous arbitrary (but not full) relaxation of the assumptions of Measurement Independence (MI) and Parameter Independence (PI), using states of local dimension $d = poly((1-\epsilon)^{-1})$ for an $\epsilon \in [0,1)$-relaxation. Here, we derive three results strengthening the state-of-art. Firstly, we show that states of constant local dimension $d$ are already sufficient to certify quantum nonlocality under arbitrary MI and PI relaxation, albeit in a non-robust manner. 
Secondly, and as a theoretical paradigm to derive the above, we introduce the notion of \textit{measurement-dependent parameter-dependent locality} as the set of input-output behaviors under simultaneous relaxations of measurement and parameter independence. We provide a rigorous characterisation of the vertices of the polytope of joint input-output behaviors that obey a $\mu$-relaxation of MI and $\epsilon$-relaxation of PI.  
We highlight a relation between nonlocality certification under PI relaxation and that under detection inefficiencies by pointing out alternative extremal correlations to the Eberhard correlations that also allow to achieve detection efficiency of $\eta = 2/3$ in the two-input scenario.
Finally, we study the implication of the relaxed assumptions for device-independent randomness certification. We analytically derive the quantum guessing probability for one player's outcomes in the CHSH Bell test, as a function of the noise in the test as well as of a leakage of an average amount of $I(X:B) < 1$ bits of input information per measurement round. 
\end{abstract}


\keywords{}

\maketitle

\textit{Introduction.-}
Bell nonlocality is a fundamental property of nature, whereby the outcomes of measurements on two or more spatially separated, entangled particles are more correlated than allowed in classical physics~\cite{Bell66,RMPBellnonlocality}. Besides its foundational interest, the phenomenon of Bell nonlocality has given rise to the possibility of Device-Independent (DI) quantum cryptography. In the DI framework, the honest users do not need to trust even the very devices used in the cryptographic protocol. They can instead directly verify the correctness and security by means of simple statistical tests of the device's input-output behavior, specifically checking for the violation of a Bell inequality. This violation would certify that the outputs of the device could not have been pre-determined, and as such could not be perfectly guessed by an adversary, allowing for the possibility of randomness extraction \cite{PAM+10, MS14}, amplification \cite{CR12, RBHH+16, RBH21, RHAP+18,BRGH+16}, and key distribution \cite{BHK05, DIQKD} besides self-testing \cite{Tsi93}, entanglement detection and quantification \cite{MBLHG13}.

The violation of Bell inequalities demonstrates that at least one reasonable physical property - such as outcome determinism, or no-signalling, or measurement independence - does not hold in nature. The question of the extent to which one or more of these properties must be relaxed is of both fundamental interest and of practical importance in understanding nonlocality as a resource in the aforementioned tasks \cite{Hall11}. 

In a bipartite Bell experiment such as in \cite{GMR+13, ZLR+22, WJSWZ98, HBD+15}, the inputs to their measurement devices are chosen by two parties Alice and Bob using local random number generators (RNGs). In an adversarial scenario, if the adversary could control or influence these RNGs, then it is possible to prepare classical devices that nevertheless appear nonlocal to the parties. The assumption that the measurement inputs are chosen independently of (any hidden variable describing the state of) the devices is referred to as measurement independence (MI) \cite{Bell77}. This assumption is related to the slightly stronger assumptions of freedom-of-choice or free will \cite{Hall10}. While various notions of relaxations of MI have been studied in the literature, a particularly relevant one was introduced in \cite{PRBLG14} building upon quantum protocols for randomness amplification of Santha-Vazirani (SV) sources \cite{CR12, RBHH+16} 
(the task whereby uniformly random bits are produced starting from a weak SV seed). Refs. \cite{PRBLG14, PG16} introduced a novel variant of Bell inequalities termed measurement dependent locality (MDL) inequalities on the joint input-output behavior in the Bell test. By systematically studying the convex polytope of measurement dependent local behaviors, it was shown that quantum nonlocality can be certified even under arbitrarily small amount of measurement independence \cite{PRBLG14}. The notion of MDL inequalities has since found widespread application in several directions \cite{TMWK25, SBB23, MA25}. 

The property of no-signalling, referred to as parameter independence (PI) at the level of a hidden variable (HV) model, is satisfied when the marginal distribution of Alice is independent of the setting chosen by Bob, and vice versa. The motivation for this assumption comes from relativistic causality \cite{HR19}, since altering the marginal distribution in one region via a change of measurement setting in a spacelike separated region would violate causality and lead to various paradoxes. Furthermore, since quantum theory respects no-signalling, it should be the case that even when the PI assumption is relaxed at the level of the HV model, averaging over hidden variables should lead to a no-signalling behavior at the observable level.


In this paper, we delineate the different notions of relaxations of the assumptions in the Bell theorem studied in the literature. In a recent result \cite{VRC25}, one of us showed how quantum nonlocality can be certified even under (a notion of) simultaneous relaxation of the assumptions of measurement independence and parameter independence by measurements on bipartite quantum systems of local dimension $d = poly((1-\epsilon)^{-1})$ for an $\epsilon \in [0,1)$-relaxation. Here, we first upgrade the result to show that a similar relaxation can be achieved already for constant $d$, specifically by suitable measurements on two-qubit systems (albeit in a much less robust manner).  Secondly and as a theoretical framework to show the above, we generalize the notion of Measurement Dependent Locality from \cite{PRBLG14, PG16} and introduce a novel paradigm of Measurement-Dependent Parameter-Dependent Locality (MDPDL). We show that the correlations under simultaneous relaxations of MI and PI form a convex polytope, and as such can be characterized using a finite set of Bell-like inequalities, allowing a systematic study of quantum nonlocality and its applications under weak seeds and leakage of input information. In the simplest $(2,2;2,2)$ Bell scenario, we derive certain facets of this convex polytope which thereby allow for optimal detection of nonlocality under this relaxation. For general Bell scenarios with binary outcomes, we prove a characterization of the extreme points of the MDPDL polytope, specifically showing that the probabilities $p_{ABXY}$ corresponding to extremal behaviors take values in a specific finite set that we derive. Our quantum correlations are alternative to the ones maximally violating the famous Eberhard inequality \cite{Eberhard93} in that they also allow a detection efficiency of $\eta = 2/3$ in the $(2,2;2,2)$ Bell scenario.
Finally, we apply our findings to the task of device-independent quantum randomness generation under leakage of input information. For the paradigmatic CHSH Bell test, we analytically compute the guessing probability of Alice's outcomes as a function of the observed Bell value and a leakage of $I(X:B)$ bits of information about Alice's input to Bob per run. We conclude with some interesting future directions and open questions.

\textit{Relaxing the assumptions in the Bell theorem.-}
In a Bell test, two separated parties, Alice and Bob, have access to two systems modeled as black-box devices. Over multiple runs in the test, Alice and Bob provide inputs to their device, denoted by respective random variables $X$ and $Y$ taking values $x \in \mathcal{X}$ and $y \in \mathcal{Y}$. They observe outputs denoted by respective random variables $A$ and $B$ taking values $a \in \mathcal{A}$ and $b \in \mathcal{B}$. The experiment is described by the set of probability distributions $\{p_{ABXY}\}$ termed a joint input-output behavior or correlation for the Bell scenario labeled as $\left(|\mathcal{A}|, |\mathcal{X}|; |\mathcal{B}|, |\mathcal{Y}|\right)$. 

Any underlying model of the correlation introduces an underlying variable $\Lambda$ on which the correlations depend. The probability $p_{ABXY}(abxy)$ can be written by Bayes theorem as
\begin{equation}
\label{eq:Bayesian-writing}
    p_{ABXY}(abxy)= \int d \lambda \; q_{\Lambda}(\lambda) \; p_{XY|\Lambda}(xy|\lambda) \; p_{AB|XY\Lambda}(ab|xy\lambda),
\end{equation}
for any $a,b,x,y$. When the underlying variable is discrete, the integration $\int d \lambda \; q_{\Lambda}(\lambda)$ is replaced by $\sum_{\lambda} q_{\Lambda}(\lambda)$.
An assumption termed \textit{outcome independence} (OI) states that given knowledge of $\lambda, x, y$, the measurement outcomes are uncorrelated, i.e., $p_{AB|XY\Lambda}(ab|xy\lambda) = p_{A|XY\Lambda}(a|xy\lambda) \cdot p_{B|XY\Lambda}(b|xy\lambda)$ for all $a,b,x,y,\lambda$. Any deterministic model is clearly outcome independent. The assumption of \textit{parameter independence} (PI) states that for every $\lambda$, the marginal distribution of outcomes seen by one party is independent of the input chosen by the other party, i.e., $p_{A|XY\Lambda}(a|xy\lambda) = p_{A|X\Lambda}(a|x\lambda)$ for all $a,x,y,\lambda$ and $p_{B|XY\Lambda}(b|xy\lambda) = p_{B|Y\Lambda}(b|y\lambda)$ for all $b,x,y,\lambda$. The assumption of \textit{measurement independence} (MI) states that measurement inputs can be chosen freely independent of the hidden variable, i.e., $p_{XY|\Lambda}(xy|\lambda) = p_{XY}(xy)$ for all $x,y,\lambda$. The conjunction $MI \wedge PI \wedge OI$ leads from \eqref{eq:Bayesian-writing} to the Local Hidden Variable (LHV) model
\begin{eqnarray}
\label{eq:LHV}
     &&p_{ABXY}(abxy)= \nonumber \\
     &&\int d \lambda \; q_{\Lambda}(\lambda) \; p_{XY}(xy) \; p_{A|X\Lambda}(a|x\lambda)\; p_{B|Y\Lambda}(b|y\lambda),
\end{eqnarray}
for all $a,b,x,y$. The relaxation of MI as mentioned in the Introduction is given for fixed $0 < l \leq h$ by $l \leq p(xy|\lambda) \leq h$ for all $x, y, \lambda$. The motivation comes from randomness amplification where inputs are chosen using bits taken from a $\mu$-SV source, which is defined by the property that each bit taken from the source is biased by at most $\mu \in [0, 1/2)$ from uniform, even when conditioned on all previous bits and other side information $\lambda$. 

The relaxation of PI requires more careful analysis. One way to define it is as the maximum possible shift in the marginal probability distribution for Bob's outcomes, induced by changing Alice's measurement setting, termed the \textit{average causal effect} \cite{CKBG15}
\begin{eqnarray}
    C_{Y \rightarrow A} &=& \sup_{x,a,y,y',\lambda} \big|p(a|x,do(y)\lambda) - p(a|x,do(y')\lambda)\big|, \nonumber \\
    C_{X \rightarrow B} &=& \sup_{y,b,x,x',\lambda} \big|p(b|do(x),y\lambda) - p(b|do(x'),y\lambda) \big|,
\end{eqnarray}
where $do(y)$ refers to resetting the input variable $Y$ to equal $y$. 
In \cite{VRC25} and in this paper, we consider a relaxation of PI in terms of the total variational distance as
\begin{eqnarray}
    \frac{1}{2} \sum_a | p_{A|XY\Lambda}(a|xy\lambda) - p_{A|XY\Lambda}(a|xy'\lambda) | &\leq& \epsilon_A\; \; \forall x,\lambda, y,y,'\nonumber \\
   \frac{1}{2} \sum_b | p_{B|XY\Lambda}(b|xy\lambda) - p_{B|XY\Lambda}(b|x'y\lambda) | &\leq& \epsilon_B \;\; \forall y,\lambda, x,x'.  \nonumber 
\end{eqnarray}
It is important to remark that in the $(2,2;2,2)$ Bell scenario of two binary variables per party that we focus on in this paper, the two relaxations are equivalent, i.e., in this scenario $C_{Y \rightarrow A} \leq \epsilon_A$, $C_{X \rightarrow B} \leq \epsilon_B$. So that a quantum correlation certifying nonlocality under arbitrary $\epsilon_A, \epsilon_B$ relaxation also does so under an arbitrary average causal effect of the input of one party upon the other party's output.

\textit{Measurement-dependent parameter-dependent locality.-} We say that the joint correlation (behavior) $p_{ABXY}$ is measurement-dependent and parameter dependent local (MDPDL) for given $l, h, \epsilon_A, \epsilon_B$ if it can be written in the form
\begin{eqnarray}
\label{eq:MDPDL}
    &&p_{ABXY}(abxy) = \nonumber \\ 
    &&\int d \lambda \; q_{\Lambda}(\lambda) \; p_{XY|\Lambda}(xy|\lambda) \; p_{A|XY\Lambda}(a|xy\lambda) \; p_{B|XY\Lambda}(b|xy\lambda), \nonumber
\end{eqnarray}
for all $a,b,x,y$ where 
\begin{eqnarray}
    &&l \leq p_{XY|\Lambda}(xy|\lambda) \leq h, \; \forall x,y,\lambda \nonumber \\
    &&\frac{1}{2} \sum_a | p_{A|XY\Lambda}(a|xy\lambda) - p_{A|XY\Lambda}(a|xy'\lambda) | \leq \epsilon_A\; \; \forall x,\lambda, y,y,'\nonumber \\
   &&\frac{1}{2} \sum_b | p_{B|XY\Lambda}(b|xy\lambda) - p_{B|XY\Lambda}(b|x'y\lambda) | \leq \epsilon_B \;\; \forall y,\lambda, x,x'.  \nonumber
\end{eqnarray}
The set of MDPDL behaviors $p_{ABXY}$ can be proven to be a convex polytope (see App. \ref{app:MDPDL-polytope}), and therefore can be characterized by its facets, that define strengthened Bell-like inequalities that we term MDPDL inequalities following \cite{PRBLG14, PG16}. A distribution cannot be explained by measurement dependence of $(l,h)$ and parameter dependence of $(\epsilon_A, \epsilon_B)$ if it violates one of the MDPDL inequalities.  

In App. \ref{app:MDPDL-polytope}, we provide a rigorous characterization of the extremal points of the MDPDL set for given $(l,h, \epsilon_A, \epsilon_B)$ in general Bell scenarios with binary outcomes, by proving that the corresponding probabilities $p_{ABXY}(abxy)$ only take values in a fixed finite set that we describe. The characterization of all extremal points allows to identify facet inequalities due to the duality between the vertex enumeration and facet enumeration problems. Software packages such as \textit{polymake} and \textit{cdd} exist for the purpose. Our vertex characterization thus allows to identify optimal facet inequalities for ruling out nonlocal hidden variable theories of specific parameter $(l,h, \epsilon_A, \epsilon_B)$ from given quantum experimental data. In App. \ref{app:facet_pd}, we demonstrate the above by deriving a facet of the polytope of conditional behaviors $p_{AB|XY}$ that obey $\epsilon$-PD constraints, where for simplicity we restrict to the case $\epsilon_A = \epsilon_B = \epsilon$. In App. \ref{app:MDPDL-ineq}, we show that the following inequality defines a supporting hyperplane of the $(l, h, \epsilon,\epsilon)$-MDPDL set:
\begin{widetext}
	\begin{equation}\label{eq:mdpd_ineq-main}
		l(1-\epsilon) \; \left[p_{ABXY}(0000)+\epsilon p_{ABXY}(1100) -\epsilon p_{XY}(00) \right]- h \; \left[p_{ABXY}(0101) + p_{ABXY}(1010) + p_{ABXY}(0011) \right] \leq 0.
\end{equation}    
\end{widetext}
We prove that there exist two-qubit correlations that lie outside the $(l, h, \epsilon,\epsilon)$-MDPDL for any $0 < l \leq h \leq 1$ and any $\epsilon \in [0,1)$. Specifically, suppose that Alice and Bob share the two-qubit state
\begin{equation}\label{eq:q-strategy-st-main}
        |\psi_{\theta}\rangle=\cos(\frac{\theta}{2})|00\rangle-\sin(\frac{\theta}{2})|11\rangle
    \end{equation}
    with a parameter $\theta=\arcsin{(3-\sqrt{4\epsilon+5})}$ dependent on the PI relaxation parameter $\epsilon$. 
Suppose that Alice and Bob measure, on this state, the respective observables $\{A_0, A_1\}$ and $\{B_0, B_1\}$ given as:
    \begin{equation}\label{eq:q_strategy-obs-main}
  \begin{split}
   A_0&=\frac{-(2+\sin\theta)\sqrt{1-\sin\theta}}{(2-\sin\theta)\sqrt{1+\sin\theta}} \sigma_z+\frac{-\sqrt2(\sin\theta)^{\frac{3}{2}}}{(2-\sin\theta)\sqrt{1+\sin\theta}} \sigma_x \nonumber, \\
  A_1&=-\sqrt{\frac{1-\sin\theta}{1+\sin\theta}}\sigma_z+\sqrt{\frac{2\sin\theta}{1+\sin\theta}}\sigma_x,
\end{split}
\end{equation}
with $B_0 = A_0$, $B_1 = A_1$ and where $\sigma_x$ and $\sigma_z$ are the usual Pauli operators. We show that the inequality \eqref{eq:mdpd_ineq-main} is violated by the resulting correlations. Thus, we find that quantum nonlocality is robust to arbitrary MI relaxation $0 \leq l < h$ and simultaneous arbitrary PI relaxation $0 \leq \epsilon < 1$.  

Interestingly by itself, when $\epsilon \rightarrow 1$, the quantum behavior yielded by the above qubit strategy can tolerate a detection efficiency of $2/3$, which is the same threshold achieved by the famous Eberhard correlations~\cite{Eberhard93}. In~\cite{github}, we show that the local fraction of this behavior, when the non-detected events are taken into account with detection efficiency close to $2/3$, is strictly less than $1$.


\textit{Analytical computation of the guessing probability in the CHSH Bell test under leakage of input information.-}
A vital advantage of device-independent (DI) protocols for quantum random number generation (DI-QRNG) and quantum key distribution (DI-QKD) is that they allow us to achieve security and classically monitor the performance of quantum devices irrespective of the noise, imperfections, lack of knowledge regarding their inner workings, or limited control over quantum devices. However, a major drawback of DI-QRNG is the expensive experimental requirements - one needs a Bell test with the detection loophole closed and with the measurement devices isolated from each other. It is clearly desirable to implement DI-QRNG using Bell tests on systems such as two ions in a ion trap coupled via their vibrational modes or with a single-photon quantum-dot based source feeding into a reconfigurable photonic chip \cite{FBM+24} as this allows to close the detection loophole and at the same time achieve high generation rates. Various approaches have been devised for this problem \cite{FBM+24, SPM13, FWEBC23, VEB+24}. 

Here, we highlight the application of our strengthened Bell inequalities to certify randomness even under leakage of input information from one measurement device to another. Our approach has the advantage of being implemented using two-qubit systems in the simplest Bell scenario, a feature which was not seen before \cite{VRC25}. Specifically, consider the $(2,2;2,2)$ scenario with two players performing two binary measurements each. Suppose that there is an average causal effect of $I(X:B) < 1$ bit, i.e., there is a one-way leakage of input information about the measurement on one ion that can influence the outcome of the measurement on a neighboring ion. We now proceed to analytically compute the guessing probability of Alice's outcomes by a quantum adversary as a function of the observed CHSH Bell value $\beta_{obs} \in [2,2\sqrt{2}]$ and $I(X:B)$.

The device-independent guessing probability $P^{(q)}_g(\beta_{obs}, \kappa, x^* = 0)$ for a quantum adversary for a fixed setting $x^*=0$ of Alice under the constraint of an observed value $\beta_{obs}$ and under communication parametrized by $\kappa$ is defined as (see \cite{RLWP25} for a derivation of the guessing probability):
\begin{eqnarray}
\label{eq:gpi}
   &&P^{(q)}_g(\beta_{obs}, \kappa, x^*=0) =\nonumber \\&& \max_{\{\textbf{p}^{a}\}, \{q_{a}\}} \sum_{a} q_{a} {p}^{a}_{AB|XY}(a,b|x^*=0,y) \nonumber \\
    &&\text{s.t.} \;\; \beta\left(\sum_{a} q_{a} {p}^{a}_{AB|XY}(a',b'|x',y')\right) = \beta_{obs}, \; \; \; \;  \nonumber \\
    && \; \; \; \;\; \textbf{p}^{a} \in \mathcal{Q}^{bip, \kappa}, \; \; \; q_{a} \geq 0 \; \; \forall a, \;\; \;  \sum_{a} q_{a} = 1.
\end{eqnarray}
Here $\mathcal{Q}^{bip, \kappa}$ denotes the set of bipartite quantum correlations with $\kappa$ amount of leakage, i.e., where Bob's observables depend upon Alice's inputs up to $\kappa$. Specifically, the CHSH inequality takes the form
\begin{equation}
 \beta(p_{AB|XY}) = \langle A_0 (B_{0,0} + B_{1,0}) + A_1 (B_{0,1} - B_{1,1}) \rangle \leq \beta_{c, \kappa},
\end{equation}
where $\beta_{c,\kappa} = 2(1+\kappa)$ is the classical bound of the expression. 
Here, $\| B_{y,x=0} - B_{y,x=1} \| \leq 2 \kappa$ for $y\in \{0,1\}$ is the constraint on the operator norm of the difference between Bob's observables for Alice's inputs $x \in \{0,1\}$. Here, as usual $\langle A_0 B_{0,0} \rangle = \sum_{a,b=0,1}(-1)^{a+b} p_{AB|XY}(a,b|x,y)$. In App. \ref{app:Pg-crosstalk} we derive a tight analytical expression for the guessing probability under leakage of input information for the CHSH Bell inequality, recovering the famous result for the CHSH guessing probability \cite{PAM+10} when $\kappa = 0$. Specifically, we prove that the quantum value of the inequality is given by
\begin{equation}\label{eq:sq}
    \beta_{q,\kappa}=\begin{cases}
        2\sqrt{2}(\kappa+\sqrt{1-\kappa^2}),\; & \text{ when } \kappa\in[0,\frac{1}{\sqrt{2}}];\\
        4, & \text{ when } \kappa\in[\frac{1}{\sqrt{2}},1).
    \end{cases}
\end{equation}
Defining
\begin{eqnarray}
&&\overline{P}_{g}^{(q)}(\beta_{obs},\kappa,x^*=0) - \frac{1}{2}= \nonumber \\
&&\begin{cases}
    \frac{1}{4}\sqrt{4-\big( \sqrt{\beta_{obs}^2-(2-4\kappa^2)^2} -4\kappa\sqrt{1-\kappa^2}\big)^2}, &  \kappa\in[0,\frac{1}{\sqrt{2}}];\\
    \frac{1}{4}\sqrt{\beta_{obs}(4-\beta_{obs})}, & \kappa\in[\frac{1}{\sqrt{2}},1),\nonumber\\
\end{cases}
\end{eqnarray} 
the guessing probability is given by the piecewise function:
\begin{widetext}
\begin{equation}\label{eq:pg_boud2}
    {P}_{g}^{(q)}(\beta_{obs},\kappa,x^*=0)= \begin{cases}
    \overline{P}_{g}^{(q)}(\beta_{obs},\kappa,x^*=0), & \text{ when } \beta_{obs}\in[\beta_{\kappa}^*,\beta_{q,\kappa}];\\
    \frac{\overline{P}_{g}^{(q)}(\beta_{\kappa}^*,\kappa,x^*=0)-1}{\beta_{\kappa}^*-\beta_{c,\kappa}} \beta_{obs}+\frac{\beta_{\kappa}^*-\beta_{c,\kappa}\overline{P}_{g}^{(q)}(\beta_{\kappa}^*,\kappa,x^*=0)}{\beta_{\kappa}^*-\beta_{c,\kappa}}, & \text{ when } \beta_{obs} \in[\beta_{c,\kappa},\beta_{\kappa}^*].\\
\end{cases}
\end{equation}
\end{widetext}
Here, $\beta_{\kappa}^*$ is found as the maximizer of the following optimization:
\begin{equation}
\begin{split}
	 &\max_{\beta \in[\beta_{c,\kappa}, \beta_{q,\kappa}]} \quad \frac{\overline{P}_{g}^{(q)}(\beta,\kappa,x^*=0) - 1}{\beta- \beta_{c,\kappa}}
\end{split}
\end{equation}
That is, $\beta_{\kappa}^*$ is the unique real solution in $[\beta_{c,\kappa},\beta_{q,\kappa}]$ of $\frac{\partial \overline{P}_{g}^{(q)}(\beta,\kappa,x^*=0)}{\partial \beta}\left(\beta-\beta_{c, \kappa}\right)=\overline{P}_{g}^{(q)}(\beta,\kappa,x^*=0)-1$.

While interesting in its own right for the analysis of CHSH-based DIQRNG experiments without the enforcement of spacelike separation between devices, it is of more interest to derive the guessing probability for the MDPDL inequality \eqref{eq:mdpd_ineq-main} under two-sided leakage of input information, a task which we pursue in forthcoming work. 

\textit{Quantum correlations that cannot be simulated with $I(X:B)=1-\delta$ bits of communication, for $\delta > 0$.-}
There has been much interest in the amount of classical communication that must be added to the LHV model to simulate a given quantum correlation. In particular, following on from a seminal result by Bacon and Toner \cite{BT03}, it has been proven that all correlations resulting from projective measurements on the two-qubit maximally entangled state can be simulated with LHV augmented by just one bit of classical communication. Sidayaja et al.'s \cite{SLYS23} numerical study using neural networks has gathered strong evidence that correlations resulting from projective measurements on any two-qubit state can be simulated with one bit of communication. A comprehensive work by Marton et al \cite{MBDV24} has identified minimal Bell scenarios where quantum nonlocality beats one bit of communication. In these results, the model considered is of the form $\int d\lambda \; q(\lambda) \; p(a|x\lambda) \; p(b|yc\lambda)$ where Bob's marginal also depends on a communicated bit $c = c(x,\lambda) \in \{0,1\}$. 

The PI relaxation considered in this paper is naturally related to the above investigations of the communication cost of nonlocality. Specifically, the result in the previous section translates to a family of correlations resulting from projective measurements on weakly entangled two-qubit states that require an amount of communication $I(X:B) = 1 - \delta$ to simulate classically, for arbitrary $\delta > 0$. We thus show that if the conjecture holds that all correlations from projective measurements on two-qubit states can be simulated with $1$ bit communication, then this $1$ bit is necessary as any small deviation from this value on average would not be sufficient to simulate the correlations.

\textit{Conclusions and Open Questions.-} In this paper, we have seen that quantum nonlocality can be certified even when the class of hidden variable models is relaxed (for arbitrary simultaneous relaxations of the MI and PI assumptions behind these models). A natural question is to generalise the relaxation to a non-IID (non-independent and identically distributed) setting. Specifically, in \cite{VRC25}, following \cite{SBB23} we have shown how quantum nonlocality can be certified even for a non-IID MI and IID PI relaxation. An important first question for forthcoming research would be to see if the PI relaxation can be similarly relaxed in a non-IID manner while still certifying nonlocality. Technically, this amounts to the condition that in a fraction of the runs, an $\epsilon$-relaxation of PI is assumed, while in the remaining runs, full dependence of either player's outcome on the other player's input is allowed. A second important question would be to build upon the analytical derivation of the single-round min-entropy given here and formulate a security proof for DI randomness expansion and amplification in the scenario of leakage of input information. A third important question would be to tackle a simultaneous relaxation also of the outcome independence assumption, either in the form considered in \cite{VRC25} or as a simultaneous relaxation of the causal outcome independence \cite{RGCC+16} assumption in the causal model picture. Finally, we have found that the two-qubit point that achieves arbitrary relaxation of MI and PI is also able to tolerate detection efficiency close to $2/3$ which is the threshold achieved by Eberhard correlation \cite{Eberhard93} in this $(2,2;2,2)$ scenario. We explore the relationship between our relaxations of the Bell theorem assumptions and the detection efficiency required to certify nonlocality in forthcoming work.

\textit{Acknowledgments.-}
We acknowledge support from the General Research Fund (GRF) Grant No.\ 17211122, and the Research Impact Fund (RIF) Grant No.\ R7035-21.

\bibliographystyle{apsrev4-2}

\onecolumngrid
\appendix
\newpage

\section{Characterization of the Measurement-Dependent Parameter-Dependent Local Set} 
\label{app:MDPDL-polytope}
We consider a bipartite Bell scenario in which an observer (Alice) in her laboratory chooses to measure $x \in \mathcal{X}$ and obtains an outcome $a \in \mathcal{A}$. A second observer (Bob) in his laboratory chooses to measure $y \in \mathcal{Y}$ and obtains $b \in \mathcal{B}$. Alice and Bob together observer the joint probability $p_{ABXY}(abxy)$ of outcomes $(a,b)$ and inputs $(x,y)$. The set of joint probabilities $\{p_{ABXY}(abxy)\}$ is called a joint input-output correlation or behavior for the Bell scenario $\left(|\mathcal{X}|, |\mathcal{A}|; |\mathcal{Y}|, |\mathcal{B}| \right)$. Any such behavior can be represented as
\begin{equation}
\label{eq:pjoint-gen}
    p_{ABXY}(abxy) = \int d\lambda \;q_{\Lambda}(\lambda) \; p_{XY|\Lambda}(xy|\lambda) \; p_{AB|XY\Lambda}(ab|xy\lambda), \;\; \forall a,b,x,y,
\end{equation}
where $\{\lambda\}$ denotes a set of hidden variables, possible common pasts that Alice and Bob's systems share. Here, to be precise, we have written the random variable as uppercase letters in the subscript and the values taken by the random variable as lowercase letters in the argument. 
The assumption of Outcome Independence (OI) which we make throughout this paper states that
\begin{equation}
p_{AB|XY\Lambda}(ab|xy\lambda) = p_{A|XY\Lambda}(a|xy\lambda) \; p_{B|XY\Lambda}(b|xy\lambda) \;\; \forall a,b,x,y,\lambda.
\end{equation}
In other words, any correlation between the outcomes of both parties can be traced back to the hidden variable and inputs, so that conditioning upon these variables results in a product distribution. 
Under OI, the joint behavior \eqref{eq:pjoint-gen} reduces to
\begin{equation}
\label{eq:pjoint-OI}
    p_{ABXY}(abxy) = \int d\lambda \; q_{\Lambda}(\lambda) \; p_{XY|\Lambda}(xy|\lambda) \; p_{A|XY\Lambda}(a|xy\lambda) \; p_{B|XY\Lambda}(b|xy\lambda), \;\; \forall a,b,x,y.
\end{equation}
We are interested in the set of correlations that obey Outcome Independence but only a relaxed form of Measurement and Parameter Independence. As mentioned in the text, we impose 
\begin{eqnarray}
\label{eq:relaxations-MIPI}
    &&l \leq p_{XY|\Lambda}(xy|\lambda) \leq h, \; \forall x,y,\lambda \nonumber \\
    &&\frac{1}{2} \sum_a \big| p_{A|XY\Lambda}(a|xy\lambda) - p_{A|XY\Lambda}(a|xy'\lambda) \big| \leq \epsilon_A,\; \; \forall x,\lambda, y,y,'\nonumber \\
   &&\frac{1}{2} \sum_b \big| p_{B|XY\Lambda}(b|xy\lambda) - p_{B|XY\Lambda}(b|x'y\lambda) \big| \leq \epsilon_B, \;\; \forall y,\lambda, x,x'.  
\end{eqnarray}
We refer to the set of joint input-output behaviors of the form in \eqref{eq:pjoint-OI} that obey the constraints \eqref{eq:relaxations-MIPI} as $(l,h,\epsilon_A, \epsilon_B)$-Measurement-dependent Parameter-dependent local (MDPDL) behaviors. 

Before we proceed to analyse the set of MDPDL behaviors, it is worth noting that there are multiple ways to relax the assumptions of measurement and parameter independence. The relaxation of MI that we follow here stems from a practical assumption in \cite{PRBLG14, PG16} that the players choose their measurement settings using a weak seed that corresponds to a Santha-Vazirani source \cite{CR12}. Specifically, even when conditioned on the past $\lambda$ (and on any previous bits), each bit produced by the source has some randomness so that $p(xy|\lambda) \geq l > 0$ for all $x, y, \lambda$. We will say then that MI is weakened arbitrarily when quantum nonlocality is observed for any $l > 0$. We note that most of our considerations in this paper will be in the $(2,2;2,2)$ Bell scenario so that the statement is justified. For if $p(xy|\lambda)$ is allowed to be $0$ for some input pair $(x,y)$ in this scenario, then the quantum (and even general no-signalling) value for any Bell expression can be simulated with a local hidden variable model in this scenario, so that the value achieved by any quantum correlation can be reproduced with such $\lambda$ and no nonlocality can be expected. The relaxation of PI can also be considered in different ways. One way to define it is as the maximum possible shift in the marginal probability distribution for Bob's outcomes, induced by changing Alice's measurement setting, termed the \textit{average causal effect} (such relaxation was considered for example in \cite{CKBG15, Hall11}):
\begin{eqnarray}
    C_{Y \rightarrow A} &=& \sup_{x,a,y,y',\lambda} \big|p(a|x,do(y)\lambda) - p(a|x,do(y')\lambda)\big|, \nonumber \\
    C_{X \rightarrow B} &=& \sup_{y,b,x,x',\lambda} \big|p(b|do(x),y\lambda) - p(b|do(x'),y\lambda) \big|,
\end{eqnarray}
where $do(y)$ refers to resetting the input variable $Y$ to equal $y$. 
In \cite{VRC25} and in this paper, we consider a relaxation of PI in terms of the total variational distance (in \eqref{eq:relaxations-MIPI}). 
It is important to remark that in the $(2,2;2,2)$ Bell scenario of two binary variables per party that we focus on in this paper, the two relaxations are equivalent, i.e., in this scenario $C_{Y \rightarrow A} \leq \epsilon_A$, $C_{X \rightarrow B} \leq \epsilon_B$. So that a quantum correlation certifying nonlocality under arbitrary $\epsilon_A, \epsilon_B$ relaxation also does so under an arbitrary average causal effect of the input of one party upon the other party's output.

In this Appendix, we are interested in the $(l,h,\epsilon_A,\epsilon_B)$-MDPDL set of joint input output behaviors $p_{ABXY}$ in Bell scenarios with two oucomes per player, but an arbitrary number of inputs. Specifically, we show that the set is a convex polytope and then derive a characterisation of the extreme points (vertices) of the polytope. 

Let us first consider the input set $\mathcal{P}^{XY,(l,h)}_1$ defined as
\begin{equation}
    \mathcal{P}^{XY,(l,h)}_1 = \big\{ p_{XY}: l \leq p_{XY}(xy) \leq h \; \forall (x, y) \in \mathcal{X} \times \mathcal{Y} \big\}.
\end{equation}
It is clear that this set is a convex polytope, as the constraints are all linear. In \cite{PRBLG14, PG16}, the vertices of this set were derived and it was found that for the case that $l < h$, the vertices are of the form
\begin{equation}
\label{eq:vertices-inputpolytope}
    V_{\mathcal{P}^{XY,(l,h)}_1} = \big\{ p_{XY}: \; \exists \pi \in S_{|\mathcal{X}||\mathcal{Y}|} \; \text{s.t.} \; p_{XY} = \pi(A^{(l,h)})\big\},
\end{equation}
where the set $A^{(l,h)}$ is defined as
\begin{equation}
    A^{(l,h)} := \big\{ \underbrace{h,\ldots, h}_{n \; \text{times}}, \underbrace{l,\ldots,l}_{|\mathcal{X}||\mathcal{Y}|-n-1 \; \text{times}}, 1- n h - (|\mathcal{X}||\mathcal{Y}|-n-1)l \big\},
\end{equation}
where $n = \big\lfloor \frac{1-|\mathcal{X}||\mathcal{Y}|l}{h-l}\big\rfloor$. Here $\pi$ is any permutation chosen from the set $S_{|\mathcal{X}||\mathcal{Y}|}$ of permutations of $|\mathcal{X}||\mathcal{Y}|$ elements. On the other hand, when $l=h$, the input set $\mathcal{P}_1^{(l,h)}$ has only one extreme point with entries $p_{XY}(x,y)= \frac{1}{|\mathcal{X}||\mathcal{Y}|}$ for all $x,y$. 

We now move to consider the set of conditional output distributions $p_{AB|XY}$ obeying outcome independence and the $(\epsilon_A, \epsilon_B)$ relaxations of parameter independence. That is, we consider the set $\mathcal{P}^{AB,(\epsilon_A, \epsilon_B)}_2$ defined as
\begin{eqnarray}\label{eq:P2-PD-polytope}
    \mathcal{P}^{AB,(\epsilon_A,\epsilon_B)}_2 = \bigg\{ p_{AB|XY} : p_{AB|XY}(ab|xy) =&& \int d\lambda \; q_{\Lambda}(\lambda) \;  p_{A|XY\Lambda}(a|xy\lambda) \; p_{B|XY\Lambda}(b|xy\lambda), \\ &&q_{\Lambda}(\lambda) \geq 0 \; \forall \lambda, \; \; \int d \lambda \; q_{\Lambda}(\lambda) = 1, \nonumber \\
    &&p_{A|XY\Lambda}(a|xy\lambda) \geq 0 \;  \forall a, x, y, \lambda, \; p_{B|XY\Lambda}(b|xy\lambda) \geq 0 \; \forall b,x,y,\lambda, \\&& \sum_{a}  p_{A|XY\Lambda}(a|xy\lambda) = 1 \; \; \forall x, y, \lambda, \; \; \sum_b p_{B|XY}\Lambda(b|xy\lambda)=1 \; \forall x,y,\lambda, \nonumber \\
    &&  \frac{1}{2} \sum_a \bigg| p_{A|XY\Lambda}(a|xy\lambda) - p_{A|XY\Lambda}(a|xy' \lambda) \bigg| \leq \epsilon_A \; \; \; \forall x, y, y', \lambda, \nonumber \\
    &&  \frac{1}{2} \sum_b \bigg| p_{B|XY\Lambda}(b|xy\lambda) - p_{B|XY\Lambda}(b|x'y \lambda) \bigg| \leq \epsilon_B \; \; \; \forall x, x', y, \lambda\bigg\}.
\end{eqnarray}
As we shall see, this set is a polytope since despite the product nature of the individual probabilities. Furthermore, despite the apparent nonlinearity induced by the absolute value constraints, we observe that each absolute value term can be replaced by a linear term with a choice of a $\pm$ sign. That is, each absolute value constraint of the $\epsilon_A$ type can be converted to $2^{|A|}$ linear constraints, and similarly for constraints of $\epsilon_B$ type. 
We now proceed to characterise the vertices of the polytope $\mathcal{P}^{AB,(\epsilon_A, \epsilon_B)}_2$ via the following theorem. 



\begin{thm}
    Consider the Bell scenario $(|\mathcal{X}|,|\mathcal{A}|; |\mathcal{Y}|,|\mathcal{B}|)$ for arbitrary $|\mathcal{X}|, |\mathcal{Y}| \geq 2$, with $|\mathcal{A}|= |\mathcal{B}| = 2$. For any extreme point (vertex) $p^{ext}_{AB|XY}$ of the set $\mathcal{P}^{AB,(\epsilon_A, \epsilon_B)}_2$ given in \eqref{eq:P2-PD-polytope}, it holds that $p^{ext}_{AB|XY}(ab|xy) \in \big\{0, \epsilon_A, 1 - \epsilon_A, 1\} \times \{0, \epsilon_B, 1-\epsilon_B, 1\}$ for all $a,b,x,y$. 
\end{thm}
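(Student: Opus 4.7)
The plan is to reduce the characterization of the vertices of $\mathcal{P}_2^{AB,(\epsilon_A,\epsilon_B)}$ to the characterization of the vertices of a simpler single-party polytope. First I would observe that the $\lambda$-integration in the definition of $\mathcal{P}_2^{AB,(\epsilon_A,\epsilon_B)}$ is exactly a convex-hull operation, so that $\mathcal{P}_2^{AB,(\epsilon_A,\epsilon_B)} = \mathrm{conv}\{p_A \cdot p_B : p_A \in \mathcal{P}_A,\; p_B \in \mathcal{P}_B\}$, where $\mathcal{P}_A$ is the set of conditionals $p_{A|XY}$ satisfying $\tfrac{1}{2}\sum_a |p_{A|XY}(a|xy)-p_{A|XY}(a|xy')| \le \epsilon_A$ for all $x,y,y'$, and analogously for $\mathcal{P}_B$. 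Each single-party set is a polytope, since each absolute-value constraint expands into finitely many linear inequalities, and the product set is therefore a bounded bilinear image whose convex hull is again polyhedral. Consequently every vertex of $\mathcal{P}_2^{AB,(\epsilon_A,\epsilon_B)}$ is attained at some product $p_A \cdot p_B$, i.e., corresponds to a single effective value of $\lambda$.

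Next I would show that in such a vertex both factors must themselves be vertices of $\mathcal{P}_A$ and $\mathcal{P}_B$. Indeed, if $p_A = \alpha\, p_A^{(1)} + (1-\alpha)\, p_A^{(2)}$ were a non-trivial convex decomposition inside $\mathcal{P}_A$, then $p_A \cdot p_B = \alpha\, p_A^{(1)} p_B + (1-\alpha)\, p_A^{(2)} p_B$ would be a non-trivial convex decomposition inside $\mathcal{P}_2^{AB,(\epsilon_A,\epsilon_B)}$: for every $(x,y)$ some $b$ has $p_{B|XY}(b|xy) > 0$, so $p_A^{(1)} p_B \neq p_A^{(2)} p_B$. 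The theorem therefore reduces to showing that any vertex of $\mathcal{P}_A$ takes values in $\{0,\epsilon_A,1-\epsilon_A,1\}$, and symmetrically for $\mathcal{P}_B$; multiplying the two factorwise gives the product set $\{0,\epsilon_A,1-\epsilon_A,1\} \times \{0,\epsilon_B,1-\epsilon_B,1\}$ appearing in the statement.

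The main step is thus the characterization of the vertices of $\mathcal{P}_A$. Because $|\mathcal{A}|=2$ I would set $f(x,y) := p_{A|XY}(0|xy) \in [0,1]$; the $\epsilon_A$ total-variation constraints become $|f(x,y)-f(x,y')| \le \epsilon_A$ for all $x,y,y'$, and these decouple over $x$, so $\mathcal{P}_A$ splits as a product $\prod_x Q_x$ with
\begin{equation}
Q_x = \Big\{ f(x,\cdot) \in [0,1]^{|\mathcal{Y}|} :\; \max_y f(x,y) - \min_y f(x,y) \le \epsilon_A \Big\}.
\end{equation}
A vertex of $Q_x$ must saturate $|\mathcal{Y}|$ linearly independent inequalities drawn from (i) $f(x,y) = 0$, (ii) $f(x,y) = 1$, and (iii) $f(x,y) - f(x,y') = \epsilon_A$. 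For $\epsilon_A < 1$ types (i) and (ii) cannot coexist, since then $|f(x,y) - f(x,y')| = 1 > \epsilon_A$; and constraints of type (iii) alone span a subspace orthogonal to $(1,1,\ldots,1)$ and so have rank at most $|\mathcal{Y}|-1$, ruling out any vertex built solely from (iii). A short case analysis on the remaining combinations then forces every vertex of $Q_x$ to satisfy either $f(x,y) \in \{0,\epsilon_A\}$ (with at least one value equal to $0$) or $f(x,y) \in \{1-\epsilon_A,1\}$ (with at least one value equal to $1$), with the all-zero and all-one vertices included as limiting cases. Since $p_{A|XY}(1|xy) = 1 - f(x,y)$, both outcome values lie in $\{0,\epsilon_A,1-\epsilon_A,1\}$, completing the reduction.

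The main obstacle is the case enumeration in the third step: one must rule out candidate vertices whose nonzero values would be an interior pair $c,\, c+\epsilon_A$ with $c \in (0, 1-\epsilon_A)$. Such configurations fail to be vertices because a simultaneous infinitesimal shift of all $f(x,\cdot)$ by the same amount preserves feasibility, which is precisely the degree of freedom reflecting that the type-(iii) constraints have rank at most $|\mathcal{Y}|-1$. The remaining edge case $\epsilon_A \in \{0,1\}$ is handled separately: $\epsilon_A=0$ forces $f(x,\cdot)$ constant and yields only the two vertices $0$ and $1$, while $\epsilon_A \geq 1$ recovers the usual Boolean cube $\{0,1\}^{|\mathcal{Y}|}$, both of which are consistent with the claimed finite set.
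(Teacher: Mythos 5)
Your proposal is correct and follows essentially the same route as the paper: both factor $\mathcal{P}_2^{AB,(\epsilon_A,\epsilon_B)}$ into a convex hull of products of points from the single-party marginal polytopes, argue that vertices must be products of single-party vertices, exploit the Cartesian-product decoupling over the local input, and then pin down the single-party vertices by counting linearly independent active constraints, using the key fact that the difference constraints $f(x,y)-f(x,y')=\pm\epsilon_A$ alone have rank at most $|\mathcal{Y}|-1$ so a non-negativity (or saturation-at-one) constraint must be active. The only cosmetic difference is that you close the single-party argument with a direct connectivity/case analysis on the active-constraint graph, whereas the paper packages the same rank observation as an induction on $|\mathcal{Y}|$; both are sound.
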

\begin{proof}
Let $\mathcal{P}^{A, \epsilon_A}$ denote the polytope of marginal correlations $\{p_{A|XY}\}$ defined as follows:
\begin{eqnarray}
\label{eq:polytope-A}
    \mathcal{P}_2^{A, \epsilon_A} = \bigg\{ p_{A|XY} : p_{A|XY}(a|x,y) &=& \int d \lambda \;  q_{\Lambda}(\lambda) \;  p_{A|XY\Lambda}(a|xy\lambda), \; \; q_{\Lambda}(\lambda) \geq 0 \; \forall \lambda, \; \; \int d\lambda \; q_{\Lambda}(\lambda) = 1, \nonumber \\
    &&p_{A|XY\Lambda}(a|xy\lambda) \geq 0 \; \; \forall a, x, y, \lambda, \; \; \sum_a  p_{A|XY\Lambda}(a|xy\lambda) = 1 \; \; \forall x, y, \lambda, \nonumber \\
    &&  \frac{1}{2} \sum_a \bigg| p_{A|XY\Lambda}(a|xy\lambda) - p_{A|XY\Lambda}(a|xy' \lambda) \bigg| \leq \epsilon_A \; \; \; \forall x, y, y', \lambda \bigg\}.
\end{eqnarray}
Again, the reason why $\mathcal{P}_2^{A, \epsilon_A}$ is a convex polytope despite the apparently nonlinear constraint is that each of the absolute value constraints in the $\epsilon_A$ parameter dependent relaxation is equivalent to $2^{|\mathcal{A}|}$ linear constraints obtained by choosing a $+/-$ sign for each $| \cdot |$ term. 

Let $\mathcal{P}_2^{B, \epsilon_B}$ denote the analogous polytope of marginal correlations $\{p_{B|XY} \}$ for Bob, i.e., 
\begin{eqnarray}
\label{eq:subpolytope-B}
    \mathcal{P}_2^{B, \epsilon_B} = \bigg\{ p_{B|XY} : p_{B|XY}(b|xy) &=& \int d\lambda \; q_{\Lambda}(\lambda) \;  p_{B|XY\Lambda}(b|xy\lambda), \; \; q_{\Lambda}(\lambda) \geq 0 \; \forall \lambda, \; \; \int d\lambda\; q_{\Lambda}(\lambda) = 1, \nonumber \\
    &&p_{B|XY\Lambda}(b|xy\lambda) \geq 0 \; \; \forall b, x, y, \lambda, \; \; \sum_b  p_{B|XY\Lambda}(b|xy\lambda) = 1 \; \; \forall x, y, \lambda, \nonumber \\
    &&  \frac{1}{2} \sum_b \bigg| p_{B|XY\Lambda}(b|xy\lambda) - p_{B|XY\Lambda}(b|x'y \lambda) \bigg| \leq \epsilon_B \; \; \; \forall y, x, x', \lambda \bigg\}.
\end{eqnarray}

We will also have occasion to consider the polytopes for fixed inputs, $\mathcal{P}_2^{x^*,A, \epsilon_A}$ and $\mathcal{P}_2^{y^*, B, \epsilon_B}$, defined as follows. 
\begin{eqnarray}
\label{eq:subpolytope-A-fixedx}
    \mathcal{P}^{x^*, A, \epsilon_A} = \bigg\{ p_{A|X=x^*,Y} : p_{A|XY}(a|x^*,y) &=& \int d \lambda \;  q_{\Lambda}(\lambda) \;  p_{A|XY\Lambda}(a|x^*y\lambda), \; \; q_{\Lambda}(\lambda) \geq 0 \; \forall \lambda, \; \; \int d\lambda \; q_{\Lambda}(\lambda) = 1, \nonumber \\
    &&p_{A|XY\Lambda}(a|x^*y\lambda) \geq 0 \; \; \forall a, y, \lambda, \; \; \sum_a  p_{A|XY\Lambda}(a|x^*y\lambda) = 1 \; \; \forall y, \lambda, \nonumber \\
    &&  \frac{1}{2} \sum_a \bigg| p_{A|XY\Lambda}(a|x^*y\lambda) - p_{A|XY\Lambda}(a|x^*y' \lambda) \bigg| \leq \epsilon_A \; \; \; \forall y, y', \lambda \bigg\}.
\end{eqnarray}
\begin{eqnarray}
\label{eq:subpolytope-B-fixedy}
    \mathcal{P}_2^{y^*,B, \epsilon_B} = \bigg\{ p_{B|X,Y=y^*} : p_{B|XY}(b|xy^*) &=& \int d\lambda \; q_{\Lambda}(\lambda) \;  p_{B|XY\Lambda}(b|xy^*\lambda), \; \; q_{\Lambda}(\lambda) \geq 0 \; \forall \lambda, \; \; \int d\lambda\; q_{\Lambda}(\lambda) = 1, \nonumber \\
    &&p_{B|XY\Lambda}(b|xy^*\lambda) \geq 0 \; \; \forall b, x, \lambda, \; \; \sum_b  p_{B|XY\Lambda}(b|xy^*\lambda) = 1 \; \; \forall x, \lambda, \nonumber \\
    &&  \frac{1}{2} \sum_b \bigg| p_{B|XY\Lambda}(b|xy^*\lambda) - p_{B|XY\Lambda}(b|x'y^* \lambda) \bigg| \leq \epsilon_B \; \; \; \forall x, x', \lambda \bigg\}.
\end{eqnarray}
\begin{lem}
\label{lem:Alice-polytope-fixedx}
Let $p^{ext}_{A|X=x^*,Y}$ be an extremal point (a vertex) of the polytope $\mathcal{P}_2^{x^*,A, \epsilon_A}$ for arbitrary $|\mathcal{X}|, |\mathcal{Y}| \geq 2$ and with $|\mathcal{A}|=2$. It holds that $p^{ext}_{A|XY}(a|x^*y) \in \{0, \epsilon_A, 1- \epsilon_A, 1\}$ for all $a, y$. 
\end{lem}
\begin{proof}
Let us write the polytope $\mathcal{P}_2^{x^*,A, \epsilon_A}$ in standard form as $\texttt{A}.|\texttt{p} \rangle \leq |\texttt{b}\rangle$. Here, the first $|\mathcal{Y}||\mathcal{A}|$ rows of $\texttt{A}$ encode the non-negativity constraints on the probabilities $p_{A|X=x^*,Y}$, the next $2|\mathcal{Y}|$ rows encode the normalization constraint for each value of $y$, and the next $\binom{|\mathcal{Y}|}{2} \cdot 2^{|\mathcal{A}|}$ rows encode the $\epsilon_A$ relaxations of the PI condition. For instance, for the case $|\mathcal{Y}| = |\mathcal{A}| = 2$, the form of the polytope is as follows:
\[
\begin{matrix}
\begin{aligned}
|\mathcal{Y}||\mathcal{A}| \; \text{rows}
  &\left\{\begin{matrix}
    \\
    \\
     \\
     \\
     \\
  \end{matrix}\right.  %
  &\begin{matrix}
  \\\phantom{\cdots}
  \end{matrix}\\ %
2|\mathcal{Y}| \; \text{rows}
  &\left\{\begin{matrix}
    \\
    \\
     \\
     \\
     \\
     \\
  \end{matrix}\right.  %
  &\begin{matrix}
  \\\phantom{\cdots}
  \end{matrix}\\ %
\binom{|\mathcal{Y}|}{2} \cdot 2^{|\mathcal{A}|}  \; \text{rows}
  &\left\{\begin{matrix}
    \\
    \\
     \\
     \\
     \\
     \\
  \end{matrix}\right.  %
  &\begin{matrix}
  \\\phantom{\cdots}
  \end{matrix}\\ %

 \end{aligned}
 \end{matrix}
 \begin{bmatrix}
 -1  &  &
  &   \\[0.5em]
   & -1 & 
 &   \\[0.5em]
  &  & -1 &
   \\[0.5em]
  &  &  &
 -1   \\[0.5em]
\hline
 1 & 
 1 & 0 & 0\\[0.5em]
 -1  &
 -1 & 0  &  0 \\[0.5em]
0  & 0& 
 1  & 1 \\[0.5em]
  0  &
  0 & -1  & -1 \\[0.5em]
\hline  
\frac{1}{2} & \frac{1}{2} &  \frac{-1}{2} & \frac{-1}{2} \\[0.5em]
\frac{1}{2} & \frac{-1}{2}  & \frac{-1}{2} &  \frac{1}{2} \\[0.5em]
\frac{-1}{2} & \frac{1}{2} & \frac{1}{2} & \frac{-1}{2}  \\[0.5em]
\frac{-1}{2} & \frac{-1}{2} & \frac{1}{2} & \frac{1}{2}  \\[0.5em]
  \end{bmatrix} \cdot
\begin{bmatrix}
\begin{aligned}
p_{A|XY}(a=0|x^*,y=0) \\[0.5em]
p_{A|XY}(a=1|x^*,y=0) \\[0.5em]
p_{A|XY}(a=0|x^*,y=1) \\[0.5em]
p_{A|XY}(a=1|x^*,y=1) \\[0.5em]
\end{aligned}
\end{bmatrix}  
\leq  
\begin{bmatrix}
\begin{aligned}
0 \\[0.5em]
0 \\[0.5em]
0 \\[0.5em]
0 \\[0.5em]
1 \\[0.5em]
-1 \\[0.5em]
1 \\[0.5em]
-1 \\[0.5em]
\epsilon_A \\[0.5em]
\epsilon_A \\[0.5em]
\epsilon_A \\[0.5em]
\epsilon_A \\[0.5em]
\end{aligned}
\end{bmatrix}.  
  \]

The dimension of the polytope $\mathcal{P}_2^{x^*,A, \epsilon_A}$ is $d = |\mathcal{Y}|\left(|\mathcal{A}| - 1 \right)$.
This comes from the normalisation constraint which imposes that there are $|\mathcal{A}|-1$ free parameters for each value of $y \in \mathcal{Y}$. By writing $p_{A|XY}(a=|\mathcal{A}|-1 \big|x^*y) = 1 - \sum_{a=0}^{|\mathcal{A}|-2} p_{A|XY}(a|x^*y)$, we parametrise the box with $|\mathcal{A}|$ entries $p_{A|XY}(a|x^*y)$ for $a=0$ to $a=|\mathcal{A}|-2$. The polytope can then be written as $\texttt{A}'.|\texttt{p}'\rangle \leq |\texttt{b}'\rangle$, with the first $d=|\mathcal{Y}|(|\mathcal{A}|-1)$ rows of $\texttt{A}'$ encoding the non-negativity constraints, and the remaining $\binom{|\mathcal{Y}|}{2}\cdot 2^{|\mathcal{A}|-1}$ rows encoding the $\epsilon_A$ relaxations of the PI condition. 

Now, recall that point in the polytope is a vertex (an extreme point or a basic feasible solution) if it cannot be expressed as a convex combination of two other distinct points in the polytope. This geometric condition is also equivalent to an algebraic condition: at a vertex, the number of saturated (also known as active, or binding) constraints must be at least the dimension $d$. If we consider only the non-redundant inequalities, exactly $d$ linearly independent inequalities are saturated.

From hereon, we restrict to the case of interest, namely $|\mathcal{A}|=2$. We prove that all entries in a vertex only take values in $\{0,\epsilon_A,1-\epsilon_A,1\}$ by an induction on number of inputs $y$.

Consider the base case: $|\mathcal{A}|=2, \; |\mathcal{Y}|=2$. In this case, we have  $|\texttt{p}'\rangle = \left[p_{A|XY}(a=0|x^*=0,y=0), \; p_{A|XY}(a=0|x^*=0,y=1) \right]^T$, and the dimension is $d=2$.  Explicitly, we have
\[
 \begin{bmatrix}
 -1  &   \\[0.5em]
   & -1 \\[0.5em]
\hline
1 & -1 \\[0.5em]
-1 & 1 \\[0.5em]
\end{bmatrix} \cdot
\begin{bmatrix}
\begin{aligned}
p_{A|XY}(a=0|x^*,y=0) \\[0.5em]
p_{A|XY}(a=0|x^*,y=1) \\[0.5em]
\end{aligned}
\end{bmatrix}  
\leq  
\begin{bmatrix}
\begin{aligned}
0 \\[0.5em]
0 \\[0.5em]
\epsilon_A \\[0.5em]
\epsilon_A \\[0.5em]
\end{aligned}
\end{bmatrix}.  
  \]
For a point $|\texttt{p}'\rangle$ to represent a vertex, it must saturate two of the non-negativity and PI relaxation conditions. There are three possibilities: (i) Both of the non-negativity constraints are active. In this case, we have $p_{A|XY}(a=0|x^*,y=0) = p_{A|XY}(a=0|x^*,y=1) = 0$ with the corresponding $p_{A|XY}(a=1|x^*,y=0) = p_{A|XY}(a=1|x^*,y=1) = 1$. (ii) One non-negativity constraint and a PI relaxation condition is active. In this case, one of the probabilities in $|\texttt{p}'\rangle$ is $0$ with the other being equal to $\epsilon_A$. Correspondingly, the probabilities for $a=1$ take values $1$ and $1 - \epsilon_A$ respectively. (iii) None of the non-negativity constraints is active. In this case, the point cannot be a vertex as the PI relaxation submatrix of $\texttt{A}'$ only has rank one. In all three cases, we see that the entries in the vertex only take values in $\{0,\epsilon_A,1-\epsilon_A, 1\}$. 

We now proceed to perform an induction on the number of inputs $y$. Suppose that for $|\mathcal{Y}|=l$ that the statement holds. Explicitly, we have
\[
\begin{matrix}
\begin{aligned}
|\mathcal{Y}| \; \text{rows}
  &\left\{\begin{matrix}
    \\
    \\
     \\
     \\
     \\
     \\
  \end{matrix}\right.  %
  &\begin{matrix}
  \\\phantom{\cdots}
  \end{matrix}\\ |\mathcal{Y}|\left(|\mathcal{Y}|-1 \right)   \; \text{rows}
  &\left\{\begin{matrix}
    \\
    \\
     \\
     \\
     \\
     \\
     \\
     \\
     \\\
     \\
     \\
     \\
     \\
     \\
     \\
     \\
  \end{matrix}\right.  %
  &\begin{matrix}
  \\\phantom{\cdots}
  \end{matrix}\\ %

 \end{aligned}
 \end{matrix}
\begin{bmatrix}
-1  & & & &   \\[0.5em]
   & -1 & & &  \\[0.5em]
   & & & \ddots & \\[0.5em]
  & & & &  -1 \\[0.5em]
\hline
1 & -1 &  &  & \\[0.5em]
1 &  & -1  &  &  \\[0.5em]
&&&\ddots \\[0.5em]
%
1 &  &  &  & -1 \\[0.5em]
\hline
 & 1 & -1 &  &  \\[0.5em]
&&&\ddots \\[0.5em]
%
 & 1 &  &  & -1  \\[0.5em]
\hline
 &  & 1 & -1 &   \\[0.5em]
 &  & 1 &  & -1 \\[0.5em]
\hline
 &  &  & 1 & -1 \\[0.5em]
\vdots \; \\[0.5em]
\end{bmatrix} \cdot
\begin{bmatrix}
\begin{aligned}
&p_{A|XY}(a=0|x^*,y=0) \\[0.5em]
&p_{A|XY}(a=0|x^*,y=1) \\[0.5em]
&\qquad \qquad \quad \vdots \\[0.5em]
&p_{A|XY}(a=0|x^*,y=l-1)\\
\end{aligned}
\end{bmatrix}  
\leq  
\begin{bmatrix}
\begin{aligned}
0 \\[0.5em]
\vdots\; \\[0.5em]
\vdots\; \\[0.5em]
0 \\[0.5em]
\epsilon_A \\[0.5em]
\vdots \; \\[0.5em]
\vdots\; \\[0.5em]
\vdots\; \\[0.5em]
\vdots\; \\[0.5em]
\vdots\; \\[0.5em]
\vdots\; \\[0.5em]
\epsilon_A \\[0.5em]
\end{aligned}
\end{bmatrix},
\]
Here, the dots at the end of $\texttt{A}'$ indicate additional rows that are sign flipped versions of the previous rows. 
We prove that the statement then holds for $|\mathcal{Y}|=l+1$. Firstly, observe that any vertex $|\texttt{p}'\rangle$ for $l+1$ inputs of Bob must also be a vertex for the sub-behavior of $l$ inputs. For, if a convex decomposition into points in the polytope exists for the sub-behavior, then such a convex decomposition can be extended into one for the whole behavior $|\texttt{p}'\rangle$ which would be a contradiction to the assumption that $|\texttt{p}'\rangle$ is a vertex. Now, in going from $l$ to $l+1$ inputs, we add one term $p_{A|XY}(a=0|x^*,y=l+1)$ to the behavior. And correspondingly the dimension increases by one from $l$ to $(l+1)$, so that one additional linearly independent constraint needs to be active in the vertex. The rank of the submatrix of $\texttt{A}'$ corresponding to the PI relaxation conditions increases by one from $(l-1)$ to $l$. There are two possibilities: (i) The new non-negativity constraint is active. In this case, we have $p_{A|XY}(a=0|x^*,y=l) = 0$, with the corresponding $p_{A|XY}(a=1|x^*,y=l)=1$, and the required condition is satisfied. (ii) A new PI relaxation condition is active. In this case, we have without loss of generality that $p_{A|XY}(a=0|x^*,y=l)-p_{A|XY}(a=0|x^*,y=l') = \epsilon_A$ for some $l' < l$. Since $p_{A|XY}(a=0|x^*,y=l')$ is a term in the vertex of the sub-behavior of $l$ inputs, it must take values in $\{0,\epsilon_A,1-\epsilon_A,1\}$. If this term $p_{A|XY}(a=0|x^*,y=l') = 0$, then the new term $p_{A|XY}(a=0|x^*,y=l) = \epsilon_A$ and the condition is satisfied. If $p_{A|XY}(a=0|x^*,y=l') \neq 0$ giving $p_{A|XY}(a=0|x^*,y=l) \geq 2\epsilon_A$, then it must be the case that $p_{A|XY}(a=0|x^*,y=l'') \neq 0$ for all $l'' < l$. For, if not, if one of these latter terms was $0$, then we would have $p_{A|XY}(a=0|x^*,y=l)-p_{A|XY}(a=0,x^*,y=l'') \geq 2\epsilon_A > \epsilon_A$, violating the PI relaxation condition for inputs $y=l$ and $y=l''$. But then $p_{A|XY}(a=0|x^*,y=l'') \neq 0$ for all $l'' < l$ implies that the sub-behavior for $l$ inputs cannot be a vertex since the PI relaxation submatrix only has rank $l-1$ (at least one of the terms needs to be a $0$ in the vertex). Therefore, we see that in this case also the probabilities only take entries in $\{0,\epsilon_A, 1-\epsilon_A, 1\}$. By mathematical induction, we conclude that the statement holds for arbitrary $|\mathcal{Y}|$.  


\end{proof}

\begin{lem}
\label{lem:vertices-pab-polytope}
Let $\{p^{\lambda_A}_{A|XY}\}_{\lambda_A}$ and $\{p^{\lambda_B}_{B|XY}\}_{\lambda_B}$ denote the sets of vertices of the polytopes $\mathcal{P}_2^{A, \epsilon_A}$ and $\mathcal{P}_2^{B, \epsilon_B}$ respectively. Then $\mathcal{P}_2^{AB, (\epsilon_A, \epsilon_B)}$ is a polytope and its vertices are given by
\begin{equation}
\label{eq:vertices-pab-polytope}
    V_{\mathcal{P}_2^{AB, (\epsilon_A, \epsilon_B)}} := \bigg\{ p^{(\lambda_A,\lambda_B)}_{AB|XY}: p^{(\lambda_A,\lambda_B)}_{AB|XY}(ab|xy) = p^{\lambda_A}_{A|XY}(a|xy) \cdot p^{\lambda_B}_{B|XY}(b|xy) \bigg\}. 
\end{equation}
\end{lem}
\begin{proof}
Firstly we observe that the convex hull of the vertices in $V_{\mathcal{P}_2^{AB, (\epsilon_A, \epsilon_B)}}$ is a subset of $\mathcal{P}_2^{AB, (\epsilon_A, \epsilon_B)}$. This can be seen from the definition of the set $\mathcal{P}_2^{AB, (\epsilon_A, \epsilon_B)}$ in \eqref{eq:P2-PD-polytope}. In particular, by defining $\lambda := (\lambda_A, \lambda_B)$, and $p_{A|XY\Lambda = (\lambda_A,\lambda_B)} := p^{\lambda_A}_{A|XY}$, $p_{B|XY\Lambda = (\lambda_A,\lambda_B)} := p^{\lambda_B}_{B|XY}$, any convex combination with weights $w_{\lambda_A,\lambda_B}$ of these vertices can be seen to belong to $\mathcal{P}_2^{AB, (\epsilon_A, \epsilon_B)}$
by setting $q_{\Lambda}(\lambda_A,\lambda_B) = \sum_{\lambda_A,\lambda_B} \delta(\lambda - (\lambda_A\lambda_B))w_{\lambda_A,\lambda_B}$. 

Similarly, every point $p_{AB|XY} \in \mathcal{P}_2^{AB, (\epsilon_A, \epsilon_B)}$ can be written as a convex combination of the points in the set $V_{\mathcal{P}_2^{AB, (\epsilon_A, \epsilon_B)}}$. To see this, consider any point $p_{AB|XY}$ in $\mathcal{P}_2^{AB, (\epsilon_A, \epsilon_B)}$ written as
\begin{equation}
\label{eq:proof-conv-comb}
    p_{AB|XY}(ab|xy) = \int d\lambda \; q_{\Lambda}(\lambda) \; p_{A|XY\Lambda}(a|xy\lambda) \; p_{B|XY\Lambda}(b|xy\lambda),
\end{equation}
for all $a,b,x,y$ with $q_{\Lambda}(\lambda) \geq 0$ and $\int d\lambda \; q_{\Lambda}(\lambda) = 1$. Now, the $p_{A|XY\Lambda}$ and $p_{B|XY\Lambda}$ can be written as convex combinations of their respective extremal points as
\begin{eqnarray}
\label{eq:conv-comb}
    p_{A|XY\Lambda=\lambda} = \sum_{\lambda_A} r_{\lambda_A}^{\lambda} \; p^{\lambda_A}_{A|XY}, \nonumber \\
 p_{B|XY\Lambda=\lambda} = \sum_{\lambda_B} s_{\lambda_B}^{\lambda} \; p^{\lambda_B}_{B|XY}.
\end{eqnarray}
Here as usual $r^{\lambda}_{\lambda_A} \geq 0$ for all $\lambda, \lambda_A$ and $\sum_{\lambda_A} r^{\lambda}_{\lambda_A} = 1$ for all $\lambda$. Similarly, $s^{\lambda}_{\lambda_B} \geq 0$ for all $\lambda, \lambda_B$ and $\sum_{\lambda_B} s^{\lambda}_{\lambda_B} = 1$ for all $\lambda$. 
So that we obtain by substituting \eqref{eq:conv-comb} into \eqref{eq:proof-conv-comb} that
\begin{equation}
    p_{AB|XY}(ab|xy) = \sum_{\lambda_A,\lambda_B} \alpha_{(\lambda_A,\lambda_B)} \; p^{\lambda_A}_{A|XY}(a|xy) \; p^{\lambda_B}_{B|XY}(b|xy)
\end{equation}
for all $a,b,x,y$ with $\alpha_{(\lambda_A,\lambda_B)}$ defined as
\begin{equation}
    \alpha_{(\lambda_A,\lambda_B)} := \int d\lambda \; q_{\Lambda}(\lambda) \;r^{\lambda}_{\lambda_A} \; s^{\lambda}_{\lambda_B}.
\end{equation}
These coefficients obey the condition that $\alpha_{(\lambda_A, \lambda_B)} \geq 0$ for all $\lambda_A, \lambda_B$ and $\sum_{\lambda_A, \lambda_B} \alpha_{(\lambda_A,\lambda_B)} = 1$. 
This shows the statement
\end{proof}
Now, the polytope $\mathcal{P}_2^{A, \epsilon_A}$ is a Cartesian product of the polytopes $\mathcal{P}_2^{x^*, A, \epsilon_A}$ for all $x^*$ \cite{Grunbaum03}. From Lemma \ref{lem:Alice-polytope-fixedx}, we see that the vertices of $\mathcal{P}_2^{A, \epsilon_A}$ have probabilities $p^{\lambda_A}_{A|XY} \in \{0, \epsilon_A, 1- \epsilon_A, 1\}$. Similarly, the vertices of $\mathcal{P}_2^{B, \epsilon_B}$ have probabilities $p^{\lambda_B}_{B|XY} \in \{0,\epsilon_B, 1-\epsilon_B, 1\}$. Combining with the Lemma \ref{lem:vertices-pab-polytope} proves the theorem. 
\end{proof}

A strictly analogous proof to that of Lemma \ref{lem:vertices-pab-polytope} shows that the $(l,h,\epsilon_A,\epsilon_B)$-MDPDL set is a polytope and its vertices are a subset of the product of the input vertices $p_{XY}$ from \eqref{eq:vertices-inputpolytope} and the vertices $p_{AB|XY}$ from \eqref{eq:vertices-pab-polytope}. This therefore provides a characterisation of the vertices, and by the Minkowski-Weyl theorem, a complete characterisation of the MDPDL set.

\subsection{MDPDL is a Strict Superset of MDL}
By construction, the correlation set defined by $(l,h,\epsilon_A,\epsilon_B)$-MDPDL models in this work strictly contains the $(l,h)$-MDL set studied in~\cite{PRBLG14, PG16}. While it is evident that MDPDL is a superset of MDL due to its more relaxed assumptions, it is important to emphasize that this inclusion is strict. That is, there exists at least one behavior $\{p_{ABXY}\}$ that lies outside the MDL set but inside the MDPDL set. In~\cite{PRBLG14, PG16}, the authors showed that in the $(2,2;2,2)$ Bell scenario, a valid inequality (i.e., a supporting hyperplane) for the $(l,h)$-MDL correlation set is given by:
\begin{equation}\label{eq:md_ineq}
		l \; p_{ABXY}(0000)- h \; \left[p_{ABXY}(0101) + p_{ABXY}(1010) + p_{ABXY}(0011) \right] \leq 0.
	\end{equation}
for any pair of parameters $0 < l \leq h$.
Moreover, this inequality is violated by the quantum behavior $\{p^{H}_{AB|XY}\}$ that achieves the maximal violation of the Hardy paradox~\cite{Hardy93, ZRLH22}. Specifically, this behavior $\{p^{H}_{AB|XY}\}$ satisfies:
\begin{equation}
    p^{H}(01|01)=0,\; p^{H}(10|10)=0,\; p^{H}(00|11)=0,\; \text{and} \; p^{H}(00|00)=\frac{5\sqrt{5}-11}{2} \approx 0.0901.
\end{equation}
It is straightforward to verify that this behavior violates the MDL inequality in Eq.~\eqref{eq:md_ineq} for any input distribution $\{p_{XY}\}$ satisfying $0<l \leq p_{XY}(xy) \leq h,\;\forall x,y$, since the inequality evaluates to:
\begin{equation}
    l \;p^{H}(00|00) p_{XY}(00)\geq l^2 \;p^{H}(00|00)>0.
\end{equation}
Thus, for any input distribution $\{p_{XY}\}$ satisfying $0<l \leq p_{XY}(xy) \leq h$ for all $x,y$, the joint behavior $\{p^{H}_{AB|XY} \cdot p_{XY}\}$ lies outside the $(l,h)$-MDL set. However, the same behavior $\{p^{H}_{AB|XY} \cdot p_{XY}\}$ (for any input distribution ${p_{XY}}$ satisfying $0< l \leq p_{XY}(xy) \leq h$ for all $x,y$) lies within the $(l,h,\epsilon,\epsilon)$-MDPDL set for sufficiently small $\epsilon \ll 1$. This is because $\{p^{H}_{AB|XY}\}$  can be decomposed as a convex combination of the extreme points defined in \eqref{eq:v_pd} of the polytope $\mathcal{P}_2^{AB,(l,h,\epsilon,\epsilon)}$.

We list the behavior $\{p^{H}_{AB|XY}\}$ in the first row of the following table. Its convex decomposition into extreme points of $\mathcal{P}_2^{AB,(l,h,\epsilon,\epsilon)}$, along with their corresponding weights $\{w_i\}$, is provided in the subsequent rows:
\begin{table}[H]
        \centering
        \begin{tabular}{c| c c c c  c c c c}
        \hline
        $\{p^{H}_{AB|XY}\}$ & $p^{H}(00|00)$ & $p^{H}(01|00)$ & $p^{H}(10|00)$ & $p^{H}(11|00)$ & $p^{H}(00|01)$ & $p^{H}(01|01)$ & $p^{H}(10|01)$ & $p^{H}(11|01)$ \\
        \hline 
          & $\frac{5\sqrt{5}-11}{2}$ & $\frac{7-3\sqrt{5}}{2}$ & $\frac{7-3\sqrt{5}}{2}$ & $\frac{\sqrt{5}-1}{2}$ & $\sqrt{5}-2$
           & 0 & $\frac{7-3\sqrt{5}}{2}$ & $\frac{\sqrt{5}-1}{2}$   \\
        \hline
        & $p^{H}(00|10)$ & $p^{H}(01|10)$ & $p^{H}(10|10)$ & $p^{H}(11|10)$ & $p^{H}(00|11)$ & $p^{H}(01|11)$ & $p^{H}(10|11)$ & $p^{H}(11|11)$ \\
        \hline
          & $\sqrt{5}-2$ & $\frac{7-3\sqrt{5}}{2}$  & 0 & $\frac{\sqrt{5}-1}{2}$ & 0
            & $\frac{3-\sqrt{5}}{2}$ & $\frac{3-\sqrt{5}}{2}$ & $\sqrt{5}-2$ \\
        \hline
        \multicolumn{9}{c}{Convex decomposition of the behavior $\{p^{H}_{AB|XY}\}$ into  extreme points~\eqref{eq:v_pd} with $\epsilon=\frac{1}{4}$} \\
        \hline
        weight & $p^{ext}_{A|XY}(0|00)$ & 
        $p^{ext}_{B|XY}(0|00)$ & $p^{ext}_{A|XY}(0|01)$ & 
        $p^{ext}_{B|XY}(0|01)$ &
        $p^{ext}_{A|XY}(0|10)$ & 
        $p^{ext}_{B|XY}(0|10)$ & $p^{ext}_{A|XY}(0|11)$ &  $p^{ext}_{B|XY}(0|11)$ \\
        \hline 
        $w_1=\frac{108\sqrt{5}-241}{5}$ & 0 & 0 & 0 & 0 &
          0  & 0 & $\epsilon$ & 0 \\
        \hline
        $w_2=\frac{108\sqrt{5}-241}{5}$ & 0 & 0 & 0 & 0 &
           $\epsilon$ & 0 & 0 & 0 \\
        \hline
        $w_3=\frac{311-138\sqrt{5}}{15}$ & 0 & 0 & 0 & 0 &
          1  & $\epsilon$ & 1 & 0 \\
        \hline
        $w_4=\frac{14-6\sqrt{5}}{3}$ & 0 & 0 & $\epsilon$ & 1 &
          0  & 0 & 0 & 1 \\
        \hline
        $w_5=\frac{1647-736\sqrt{5}}{20}$ & $\epsilon$ & $\epsilon$ & 0 & 0 &
           0 & 0 & 0 & 0 \\
        \hline
        $w_6=\frac{1169-522\sqrt{5}}{20}$ & $\epsilon$ & $1-\epsilon$ & 0 & 0 &
          1 & 1 & 1 & 0 \\
        \hline
        $w_7=\frac{882\sqrt{5}-1969}{30}$ & $\epsilon$  & 1 & 0 & 0 &
          1  & 1 & 1 & 0 \\
        \hline
        $w_8=\frac{141-63\sqrt{5}}{2}$ & $1-\epsilon$ & $\epsilon$ & 1 & 1 & 
           0 & 0 & 0 & 1 \\
        \hline
        $w_9=\frac{99\sqrt{5}-221}{3}$ & 1 & $\epsilon$ & 1 & 1 &
          0  & 0 & 0 &  1\\
        \hline
        \end{tabular}
            \caption{Convex decomposition of the behavior ${p^{H}_{AB|XY}}$ into extreme points of the set $\mathcal{P}_2^{AB,(l,h,\epsilon,\epsilon)}$ with $\epsilon=\frac{1}{4}$.}
            \label{tab:placeholder}
        \end{table}

\section{Facet inequality of the $\epsilon$-Parameter dependent polytope $\mathcal{P}^{AB,(\epsilon, \epsilon)}_2$ in the $(2,2;2,2)$ Bell scenario and its quantum violation}\label{app:facet_pd}
\subsection{Facet inequality of the $\epsilon$-Parameter dependent polytope $\mathcal{P}^{AB,(\epsilon, \epsilon)}_2$ in the $(2,2;2,2)$ Bell scenario}
    The $\epsilon$-Parameter dependent polytope for any $\epsilon \in [0,1)$ is defined as
	\begin{equation}\label{eq:def_pd}
		\begin{split}
			\mathcal{P}^{AB,(\epsilon, \epsilon)}_2 := \{& p_{AB|XY} : p_{AB|XY}(ab|xy)  = \int d \lambda \; q_{\Lambda}(\lambda)  \; p_{A|XY\Lambda}(a|xy\lambda) \; p_{B|XY\Lambda}(b|xy\lambda),\\
			& \frac{1}{2} \sum_a \bigg| p_{A|XY\Lambda}(a|xy\lambda) - p_{A|XY\Lambda}(a|xy'\lambda) \bigg| \leq \epsilon\; \; \forall x,\lambda, y,y',\\
			& \frac{1}{2} \sum_b \bigg| p_{B|XY\Lambda}(b|xy\lambda) - p_{B|XY\Lambda}(b|x'y\lambda) \bigg| \leq \epsilon \;\; \forall y,\lambda, x,x'. \\
			& q_{\Lambda}(\lambda)\geq 0,\forall \lambda, \int d \lambda q_{\Lambda}(\lambda)=1\}.
		\end{split}
	\end{equation}
Combining this with the result proven in App.~\ref{app:MDPDL-polytope}, we obtain the vertices of the $\mathcal{P}^{AB,(\epsilon, \epsilon)}_2$ polytope:
	\begin{equation}\label{eq:v_pd}
		\begin{split}
			\mathcal{V}_{\mathcal{P}^{AB,(\epsilon, \epsilon)}_2}  =\big\{& p_{AB|XY} :  p_{AB|XY}(ab|xy)  = p_{A|XY}(a|xy)p_{B|XY}(b|xy) \\
			& \text{with } p_{A|XY}(a|xy)\in\{0,1,\epsilon,1-\epsilon\}, \; \frac{1}{2} \sum_a \big| p_{A|XY}(a|xy) - p_{A|XY}(a|xy') \big|\leq \epsilon\;\forall x, y,y', \\
			& \text{and } p_{B|XY}(b|xy)\in\{0,1,\epsilon,1-\epsilon\}, \;\frac{1}{2} \sum_b \big| p_{B|XY}(b|xy) - p_{B|XY}(b|x'y) \big|\leq \epsilon\;\forall y, x,x'.\big\}\\
		\end{split}
	\end{equation}
In the $(2,2;2,2)$ Bell scenario, the variables $a, b, x, y$ take values in $\{0,1\}$. Hence, the $\epsilon$-PD constraints in Eq.~\eqref{eq:v_pd} reduce to:
	\begin{equation}
		\begin{split}
			& \frac{1}{2}\left| p_{A|XY}(a=0|x,y=0) - p_{A|XY}(a=0|x,y'=1)\right| +\frac{1}{2}\left| p_{A|XY}(a=1|x,y=0) - p_{A|XY}(a=1|x,y'=1)\right| \\
			= & \left| p_{A|XY}(a=0|x,y=0) - p_{A|XY}(a=0|x,y'=1)\right| \leq \epsilon\; \; \;\;\forall x\in\{0,1\}
		\end{split}
	\end{equation}
	and similarly,
	\begin{equation}
		\left| p_{B|XY}(b=0|x=0,y) - p_{B|XY}(b=0|x'=1,y)\right| \leq \epsilon\; \; \;\;\forall y\in\{0,1\}
	\end{equation}
    In the above derivation, we used the normalization conditions $\sum_a p_{A|XY}(a|x,y) = 1$ and $\sum_b p_{B|XY}(b|x,y) = 1$ for all $x,y \in \{0,1\}$. Therefore, in the $(2,2;2,2)$ Bell scenario, the $\epsilon$-PD polytope $\mathcal{P}^{AB,(\epsilon, \epsilon)}_2(2,2;2,2)$ contains $1296$ vertices, each of the form:
	\begin{equation}\label{eq:v_pd}
		\begin{split}
			&\mathcal{V}_{\mathcal{P}^{AB,(\epsilon, \epsilon)}_2(2,2;2,2)} \\
            &=\big\{ p_{AB|XY} :  p_{AB|XY}(ab|xy)  = p_{A|XY}(a|xy)p_{B|XY}(b|xy) \\
			&\qquad \text{with } \left(p_{A|XY}(0|x,0),\;p_{A|XY}(0|x,1)\right)\in\big \{(0,0),(1,1),(0,\epsilon),(1,1-\epsilon),(\epsilon,0),(1-\epsilon,1)\big\}, \; \forall x\in\{0,1\}\\
			& \qquad\text{and } \left(p_{B|XY}(0|0,y), \; p_{B|XY}(0|1,y)\right)\in\big \{(0,0),(1,1),(0,\epsilon),(1,1-\epsilon),(\epsilon,0),(1-\epsilon,1)\big\}, \; \forall y\in\{0,1\} \big\}\\
		\end{split}
	\end{equation}
	Note that the extreme points do not take the values
    \begin{equation}    \left(p_{A|XY}(0|x,0), \;p_{A|XY}(0|x,1)\right)=(\epsilon,\epsilon) \text{ or } \left(p_{A|XY}(0|x,0), \;p_{A|XY}(0|x,1)\right)=(1-\epsilon,1-\epsilon).
    \end{equation}
    This is because any vertex with $\left(p_{A|XY}(0|x,0), \; p_{A|XY}(0|x,1)\right) = (\epsilon, \epsilon)$ can be decomposed as a convex combination of the deterministic vertices $\left(p_{A|XY}(0|x,0), \; p_{A|XY}(0|x,1)\right) = (1,1)$ with weight $\epsilon$ and $(0,0)$ with weight $1-\epsilon$.
\begin{theorem}
        For any $\epsilon\in[0,1)$, the following inequality defines a facet of the $\epsilon$-PD polytope $\mathcal{P}^{AB,(\epsilon, \epsilon)}_2(2,2;2,2)$:
	\begin{equation}\label{eq:pd_ineq}
		(1-\epsilon)\;p_{AB|XY}(00|00)+\epsilon(1-\epsilon) \;p_{AB|XY}(11|00)- p_{AB|XY}(01|01) - p_{AB|XY}(10|10) - p_{AB|XY}(00|11)\leq \epsilon(1-\epsilon).
	\end{equation}
\end{theorem}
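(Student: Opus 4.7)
The plan is to establish \eqref{eq:pd_ineq} as a facet in two steps: verify validity on $\mathcal{P}^{AB,(\epsilon,\epsilon)}_2(2,2;2,2)$, and then exhibit enough affinely independent vertices saturating it. Since the polytope is the convex hull of the $1296$ vertices enumerated in \eqref{eq:v_pd}, validity reduces to a check on each vertex. Every such vertex factorizes as $p_{AB|XY}(ab|xy) = p_{A|XY}(a|xy)\,p_{B|XY}(b|xy)$, with each marginal pair $(p_{A|XY}(0|x,0), p_{A|XY}(0|x,1))$ and $(p_{B|XY}(0|0,y), p_{B|XY}(0|1,y))$ taking a value in the six-element set $S_\epsilon=\{(0,0),(1,1),(0,\epsilon),(1,1-\epsilon),(\epsilon,0),(1-\epsilon,1)\}$.

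First I would substitute the product form into the LHS of \eqref{eq:pd_ineq} and, exploiting the symmetry $A \leftrightarrow B$ (since the inequality is invariant under swapping the roles of the two parties together with the relabeling $x\leftrightarrow y$) and the bit-flip symmetries hidden in $S_\epsilon$, reduce the $6^4$ combinations to a manageable list of cases. The structure of \eqref{eq:pd_ineq} is Hardy-like in the first three terms $p(01|01), p(10|10), p(00|11)$: if any of them is nonzero the negative contribution often dominates, while if all three vanish the product structure forces the remaining positive combination $(1-\epsilon)p(00|00) + \epsilon(1-\epsilon)p(11|00)$ to be at most $\epsilon(1-\epsilon)$. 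In particular the bound is attained, e.g., by Alice playing $(p_{A|XY}(0|0,0),p_{A|XY}(0|0,1))=(\epsilon,0)$ paired with Bob's $(p_{B|XY}(0|0,0),p_{B|XY}(0|1,0))=(1-\epsilon,1)$ and similarly chosen marginals for the remaining input pair, which yields LHS equal to $\epsilon(1-\epsilon)$.

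Next, for the facet claim, I would note that $\dim \mathcal{P}^{AB,(\epsilon,\epsilon)}_2(2,2;2,2) = 12$, since the polytope is full-dimensional inside the affine hull cut out by the four normalization constraints on the $16$-entry vector $\{p_{AB|XY}(ab|xy)\}$. The required $12$ affinely independent saturating vertices would be constructed by starting from a core Hardy-like tight configuration as above and then activating independent $\epsilon$-shifts in each of the four Alice-marginals and four Bob-marginals (still within $S_\epsilon$) while preserving the three Hardy zeros and the anchor value at $(x,y)=(0,0)$. Affine independence would then be verified by computing the rank of the difference matrix of these vertices relative to a fixed base vertex; a dimension-$11$ rank certifies the facet.

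The main obstacle will be the combinatorial bookkeeping in locating $12$ genuinely affinely independent saturating vertices while simultaneously preserving $p(01|01)=p(10|10)=p(00|11)=0$, since the Hardy-like zero pattern constrains each party's marginals substantially and one must check that the saturating set does not collapse in the degenerate limits $\epsilon \to 0$ or $\epsilon \to 1$. A cleaner alternative, which I would fall back on, is to show that any valid inequality saturated on the same vertices as \eqref{eq:pd_ineq} must be a positive multiple of \eqref{eq:pd_ineq} modulo the normalization equalities; this amounts to a linear algebra computation on the saturation-support matrix and avoids explicit vertex-counting.
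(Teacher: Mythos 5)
Your overall strategy is the same as the paper's: check validity of \eqref{eq:pd_ineq} by evaluating it on the finitely many product vertices of \eqref{eq:v_pd}, note that $\dim \mathcal{P}^{AB,(\epsilon,\epsilon)}_2(2,2;2,2)=12$, and certify the facet by exhibiting $12$ affinely independent saturating vertices via a rank (determinant) computation. The paper does exactly this, classifying the $56$ saturating vertices into five types and checking that a $12\times 12$ matrix of selected vertices has determinant $4(2-\epsilon)(1-\epsilon)^6\epsilon^5>0$.

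One concrete error: the example you offer as attaining the bound does not saturate it. With $p_{A|XY}(0|00)=\epsilon$ and $p_{B|XY}(0|00)=1-\epsilon$ one gets $p(00|00)=\epsilon(1-\epsilon)$ and $p(11|00)=\epsilon(1-\epsilon)$, so the positive part of the LHS equals $\epsilon(1-\epsilon)^2(1+\epsilon)=\epsilon(1-\epsilon)(1-\epsilon^2)$, which falls short of $\epsilon(1-\epsilon)$ by $\epsilon^3(1-\epsilon)>0$ for $\epsilon\in(0,1)$. Saturation at the $(x,y)=(0,0)$ slot with the three Hardy zeros in place instead requires $\{p_{A|XY}(0|00),p_{B|XY}(0|00)\}=\{\epsilon,1\}$ (giving $p(00|00)=\epsilon$, $p(11|00)=0$) or $p_{A|XY}(0|00)=p_{B|XY}(0|00)=0$ (giving $p(00|00)=0$, $p(11|00)=1$); the remaining saturating families have $p(00|00)=1$ with exactly one of the negative terms equal to $(1-\epsilon)^2$. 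You would need to rebuild your list of $12$ saturating vertices around these correct configurations before the affine-independence check can go through. Your fallback (showing the saturation support determines the inequality up to normalization and positive scaling) is a legitimate equivalent route, but it still requires the correct saturating set as input.
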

    \begin{proof}
    First, the upper bound in inequality~\eqref{eq:pd_ineq} is verified by explicitly evaluating the left-hand side over all vertices listed in Eq.~\eqref{eq:v_pd}. Next, we aim to prove that Eq.~\eqref{eq:pd_ineq} defines a facet of the polytope $\mathcal{P}^{AB,(\epsilon, \epsilon)}_2(2,2;2,2)$. In other words, we will show that all vertices from Eq.~\eqref{eq:v_pd} that saturate the inequality compose a hyperplane of dimension $\text{dim}\left[\mathcal{P}^{AB,(\epsilon, \epsilon)}_2(2,2;2,2) \right] - 1$.

    For any $\epsilon \in (0,1)$, the dimension of the polytope is $\text{dim}\left[\mathcal{P}^{AB,(\epsilon, \epsilon)}_2(2,2;2,2)\right] = 12$, due to the four normalization constraints. When $\epsilon = 0$, the $\epsilon$-PD conditions in Eq.~\eqref{eq:def_pd} and Eq.~\eqref{eq:v_pd} reduce to the standard no-signaling constraints, and in that case, $\text{dim}\left[\mathcal{P}^{AB,(\epsilon=0, \epsilon=0)}_2(2,2;2,2) \right] = 8$. Since the $\epsilon = 0$ case is well studied in the literature, we focus on the nontrivial regime $\epsilon \in (0,1)$ in the following.

    There are a total of $56$ vertices that saturate the upper bound of Eq.~\eqref{eq:pd_ineq}. These $56$ vertices can be classified into five distinct types. (see \cite{github} for an explicit list of the vertices.)
    
    \begin{enumerate}
        \item [1.] 
        The first type of vertices satisfy $p_{AB|XY}(00|00)=0,\; p_{AB|XY}(11|00)=1$ and $ p_{AB|XY}(01|01)=0,\; p_{AB|XY}(10|10)=0, \; p_{AB|XY}(00|11)=0.$ Such vertices must have $p_{A|XY}(0|00)=0$ and $p_{B|XY}(0|00)=0$. There are $28$ vertices of this type that saturate the upper bound of Eq.~\eqref{eq:pd_ineq}. 

        \item [2.] 
        The second type of vertices satisfy $p_{AB|XY}(00|00)=\epsilon,\; p_{AB|XY}(11|00)=0$ and $ p_{AB|XY}(01|01)=0,\; p_{AB|XY}(10|10)=0, \; p_{AB|XY}(00|11)=0.$ These vertices must satisfy either $p_{A|XY}(0|00)=\epsilon,\; p_{B|XY}(0|00)=1$ or $p_{A|X,Y}(0|00)=1,\; p_{B|X,Y}(0|00)=\epsilon$. There are $16$ such vertices saturating the upper bound of Eq.~\eqref{eq:pd_ineq}. 

        \item [3.] 
        The third type of vertices satisfy $p_{AB|XY}(00|00)=1,\; p_{AB|XY}(11|00)=0$ and $ p_{AB|XY}(01|01)=(1-\epsilon)^2,\; p_{AB|XY}(10|10)=0, \; p_{AB|XY}(00|11)=0.$  These vertices must have $p_{A|XY}(0|00)=1$, $_{B|XY}(0|00)=1$, and simultaneously $p_{A|XY}(0|01)=1-\epsilon$, $p_{B|XY}(0|01)=\epsilon$. There are $4$ vertices of this type that saturate the upper bound of Eq.~\eqref{eq:pd_ineq}. 
        

        \item [4.] 
        The fourth type of vertices satisfy $p_{AB|XY}(00|00)=1,\; p_{AB|XY}(11|00)=0$ and $ p_{AB|XY}(01|01)=0,\; p_{AB|XY}(10|10)=(1-\epsilon)^2, \; p_{AB|XY}(00|11)=0.$ These vertices must have $p_{A|XY}(0|00)=1$, $p_{B|XY}(0|00)=1$, and simultaneously $p_{A|XY}(0|10)=\epsilon$, $p_{B|XY}(0|10)=1-\epsilon$. There are $4$ such vertices that saturate the upper bound of Eq.~\eqref{eq:pd_ineq}. 
        

        \item [5.] 
        The fifth type of vertices satisfy $p_{AB|XY}(00|00)=1,\; p_{AB|XY}(11|00)=0$ and $ p_{AB|XY}(01|01)=0,\; p_{AB|XY}(10|10)=0, \; p_{AB|XY}(00|11)=(1-\epsilon)^2.$ These vertices must have $p_{A|XY}(0|00)=1$, $p_{B|XY}(0|00)=1$, and simultaneously $p_{A|XY}(0|11)=1-\epsilon$, $p_{B|XY}(0|11)=1-\epsilon$. There are $4$ such vertices that saturate the upper bound of Eq.~\eqref{eq:pd_ineq}. 
        
    \end{enumerate}
For any $\epsilon \in (0,1)$, these $56$ vertices that saturate the upper bound of Eq.~\eqref{eq:pd_ineq} lie on a hyperplane of dimension $11$, as can be verified by noting that $12$ vertices ($V_1, V_2, V_7, V_{10}, V_{24}, V_{26}, V_{29}, V_{37}, V_{44}, V_{45}, V_{51}, V_{53}$ in \cite{github}) are affinely independent.   

To be clear, the entries for these $12$ vertices are listed in the table below (subscripts of $p_{AB|XY}$ suppressed in the table for compactness). The entries $p_{AB|XY}(00|xy)$ for all $x,y\in\{0,1\}$ are omitted by normalization. Each remaining entry is computed via $p_{AB|XY}(01|xy)=p_{A|XY}(0|xy)\big(1-p_{B|XY}(0|xy)\big)$, $p_{AB|XY}(10|xy)=\big(1-p_{A|XY}(0|xy)\big)p_{B|XY}(0|xy)$, and $p_{AB|XY}(11|xy)=\big(1-p_{A|XY}(0|xy)\big)\big(1-p_{B|XY}(0|xy)\big)$. Viewing the table as a $12\times12$ matrix $M_{V}$, these $12$ vertices are affinely independent because the matrix $M_{V}$ has $\text{rank}(M_{V})=12$, which is established by its determinant being strictly positive for any $\epsilon\in(0,1)$:
    \begin{equation}
    \text{det}(M_{V})=4(2-\epsilon)(1-\epsilon)^6\epsilon^5>0.
    \end{equation}
    Therefore, the inequality in Eq.~\eqref{eq:pd_ineq} defines a facet of the polytope $\mathcal{P}^{AB,(\epsilon, \epsilon)}_2(2,2;2,2)$.

\begin{table}[H]
\centering
\setlength{\tabcolsep}{4pt}
\scalebox{0.9}{\begin{tabular}{c|ccc|ccc|ccc|ccc}
\hline
Index &
$p(01|00)$ & $p(10|00)$ & $p(11|00)$ &
$p(01|01)$ & $p(10|01)$ & $p(11|01)$ &
$p(01|10)$ & $p(10|10)$ & $p(11|10)$ &
$p(01|11)$ & $p(10|11)$ & $p(11|11)$ \\
\hline
$V_{1}$  & $0$ & $0$ & $1$ & $0$ & $0$ & $1$ & $0$ & $0$ & $1$ & $0$ & $0$ & $1$ \\
$V_{2}$  & $0$ & $0$ & $1$ & $0$ & $1$ & $0$ & $0$ & $0$ & $1$ & $0$ & $1$ & $0$ \\
$V_{7}$  & $0$ & $0$ & $1$ & $0$ & $0$ & $1$ & $1$ & $0$ & $0$ & $1$ & $0$ & $0$ \\
$V_{10}$ & $0$ & $0$ & $1$ & $0$ & $\epsilon$ & $1-\epsilon$ & $0$ & $0$ & $1$ & $\epsilon$ & $0$ & $1-\epsilon$ \\
$V_{24}$ & $0$ & $0$ & $1$ & $0$ & $1-\epsilon$ & $0$ & $\epsilon$ & $0$ & $1-\epsilon$ & $0$ & $1-\epsilon$ & $\epsilon$ \\
$V_{26}$ & $0$ & $0$ & $1$ & $0$ & $\epsilon$ & $1-\epsilon$ & $1-\epsilon$ & $0$ & $0$ & $1$ & $0$ & $0$ \\
$V_{29}$ & $0$ & $1-\epsilon$ & $0$ & $0$ & $0$ & $1$ & $0$ & $0$ & $0$ & $1$ & $0$ & $0$ \\
$V_{37}$ & $1-\epsilon$ & $0$ & $0$ & $0$ & $0$ & $0$ & $0$ & $0$ & $1$ & $0$ & $1$ & $0$ \\
$V_{44}$ & $1-\epsilon$ & $0$ & $0$ & $0$ & $\epsilon$ & $0$ & $\epsilon$ & $0$ & $1-\epsilon$ & $0$ & $1-\epsilon$ & $\epsilon$ \\
$V_{45}$ & $0$ & $0$ & $0$ & $(1-\epsilon)^2$ & $\epsilon^2$ & $\epsilon-\epsilon^2$ & $0$ & $0$ & $0$ & $1$ & $0$ & $0$ \\
$V_{51}$ & $0$ & $0$ & $0$ & $0$ & $\epsilon$ & $0$ & $\epsilon^2$ & $(1-\epsilon)^2$ & $\epsilon-\epsilon^2$ & $0$ & $0$ & $1$ \\
$V_{53}$ & $0$ & $0$ & $0$ & $0$ & $0$ & $0$ & $0$ & $0$ & $0$ & $\epsilon-\epsilon^2$ & $\epsilon-\epsilon^2$ & $\epsilon^2$ \\
\hline
\end{tabular}}
\caption{The 12 affinely independent vertices that saturate the upper bound of Eq.~\eqref{eq:pd_ineq}.}
\label{}
\end{table}
\end{proof}
\subsection{Quantum violation of the Facet Inequality Eq.~\eqref{eq:pd_ineq}}

    In this section, we prove that for any $\epsilon \in [0,1)$, the facet inequality Eq.~\eqref{eq:pd_ineq} can be violated by a quantum correlation arising from a two-qubit system. This system is closely related to the Tilted Hardy paradox introduced by us in~\cite{ZRLH22}.

    \begin{lemma}
        For any $\epsilon \in [0,1)$ the facet inequality Eq.~\eqref{eq:pd_ineq} can be violated by the two-qubit partially entangled state 
        \begin{equation}
        |\psi_{\theta}\rangle=\cos(\frac{\theta}{2})|00\rangle-\sin(\frac{\theta}{2})|11\rangle
        \end{equation}
        with the parameter $\theta=\arcsin{(3-\sqrt{4\epsilon+5})}$, along with the observables:
        \begin{equation}\label{eq:q_strategy}
        \begin{split}
        A_0=B_0&=\frac{-(2+\sin\theta)(1-\sin\theta)^{\frac{1}{2}}}{(2-\sin\theta)(1+\sin\theta)^{\frac{1}{2}}} \sigma_z+\frac{-\sqrt2\sin\theta(\sin\theta)^{\frac{1}{2}}}{(2-\sin\theta)(1+\sin\theta)^{\frac{1}{2}}} \sigma_x \nonumber, \\
        A_1=B_1&=\frac{-(1-\sin\theta)^{\frac{1}{2}}}{(1+\sin\theta)^{\frac{1}{2}}}\sigma_z+\frac{\sqrt2(\sin\theta)^{\frac{1}{2}}}{(1+\sin\theta)^{\frac{1}{2}}}\sigma_x ,
    \end{split}
    \end{equation}
    where $\sigma_x$ and $\sigma_z$ are the Pauli operators.
    \end{lemma}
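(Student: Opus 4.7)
The plan is to compute the five joint probabilities $p_{AB|XY}(ab|xy)$ appearing in \eqref{eq:pd_ineq} directly from Born's rule for the state $|\psi_\theta\rangle$ and the observables in \eqref{eq:q_strategy}, substitute into the left-hand side of \eqref{eq:pd_ineq}, and show that the resulting expression strictly exceeds $\epsilon(1-\epsilon)$ for every $\epsilon\in[0,1)$.

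First, I would verify that $A_0=B_0$ and $A_1=B_1$ are genuine $\pm 1$ Pauli observables, by checking that the squared coefficients of $\sigma_z$ and $\sigma_x$ sum to one. For $A_1$ this is immediate, since $\tfrac{1-\sin\theta}{1+\sin\theta}+\tfrac{2\sin\theta}{1+\sin\theta}=1$, and an analogous identity holds for $A_0$. Then the spectral projectors $\Pi_{A_x}^{\pm}$ and $\Pi_{B_y}^{\pm}$ can be written explicitly in the computational basis, and each probability $p_{AB|XY}(ab|xy)=\langle\psi_\theta|\Pi_{A_x}^{(-1)^a}\otimes\Pi_{B_y}^{(-1)^b}|\psi_\theta\rangle$ becomes an explicit algebraic function of $s:=\sin\theta$.

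Second, I would exploit the Hardy-like structure of the strategy. The observables in \eqref{eq:q_strategy} are a tilted refinement of those achieving the standard Hardy paradox \cite{Hardy93,ZRLH22}, and I expect they are chosen precisely so that the three \emph{zero-probability} conditions
\begin{equation}
p_{AB|XY}(01|01)=p_{AB|XY}(10|10)=p_{AB|XY}(00|11)=0 \nonumber
\end{equation}
hold identically in $\theta$. Each such condition is equivalent to the orthogonality of a product vector $(\Pi_{A_x}^{a}\otimes\Pi_{B_y}^{b})|\psi_\theta\rangle$ to the relevant basis state, and is verified by a short direct computation once the projectors are in hand. After this reduction, the violation to be proved collapses to
\begin{equation}
p_{AB|XY}(00|00)+\epsilon\,p_{AB|XY}(11|00)>\epsilon, \nonumber
\end{equation}
which in the limit $\epsilon\to 0$ recovers Hardy's condition $p(00|00)>0$.

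Third, I would substitute the parameterization $s=3-\sqrt{4\epsilon+5}$ into the closed-form expressions for $p_{AB|XY}(00|00)$ and $p_{AB|XY}(11|00)$ and prove the above polynomial inequality. Since $s$ satisfies the quadratic relation $s^2-6s+4(1-\epsilon)=0$, I would keep the intermediate computation in the variable $s$ (treating $\epsilon$ as determined by this relation) and reduce the violation condition to a low-degree polynomial inequality in $s$ over the interval $(0,3-\sqrt{5}]$, which is exactly the image of $\epsilon\in[0,1)$ under the chosen parameterization. The main obstacle is purely computational: the observables contain nested radicals in $s$, so care is needed to simplify the Born-rule expressions without sign errors; once the problem is reduced to a polynomial statement in $s$ on a bounded interval, positivity can be settled by direct sign analysis (or, if needed, Sturm's theorem). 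If the three Hardy-type probabilities turn out not to vanish identically but only to be $O(\epsilon)$-small, the same scheme still works, with the small error terms absorbed into the slack of the $p(00|00)>\epsilon(1-p(11|00))$ bound.
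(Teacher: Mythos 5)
Your proposal follows essentially the same route as the paper's proof: the paper likewise verifies that the strategy produces the three Hardy-type zeros $p(01|01)=p(10|10)=p(00|11)=0$ and then reduces the violation to the scalar inequality $p(00|00)+\epsilon\,p(11|00)=\tfrac{(4\epsilon+5)\sqrt{4\epsilon+5}-(12\epsilon+11)}{2(1+\epsilon)}>\epsilon$ for all $\epsilon\in[0,1)$. The only difference is one of explicitness -- the paper simply states the computed correlations, whereas you lay out the Born-rule computation (including the useful check that the Bloch vectors are normalized and the substitution $s^2-6s+4(1-\epsilon)=0$), which would fill in the details the paper omits.
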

    
    \begin{proof}
    This quantum strategy yields the following correlations:
\begin{equation}
\begin{split}
		& p_{AB|XY}(01|01)=0,\; p_{AB|XY}(10|10)=0,\; p_{AB|XY}(00|11)=0,\;\\
        & p_{AB|XY}(00|00)+\epsilon \;p_{AB|XY}(11|00)= \frac{(4\epsilon+5)\sqrt{4\epsilon+5}-(12\epsilon+11)}{2(1+\epsilon)}> \epsilon,\; \forall \epsilon\in[0,1). 
\end{split}
\end{equation}
Therefore, this quantum strategy leads to a strict violation of the facet inequality Eq.~\eqref{eq:pd_ineq}, as
\begin{equation}
    \left(1-\epsilon\right)\left[\frac{(4\epsilon+5)\sqrt{4\epsilon+5}-(12\epsilon+11)}{2(1+\epsilon)}-\epsilon\right]>0,\; \forall \epsilon\in[0,1).
\end{equation}
\end{proof}
We thus see that for any $\epsilon \in [0,1)$, there exists a quantum correlation that lies outside the $\epsilon$-parameter dependent set $\mathcal{P}^{AB,(\epsilon, \epsilon)}_2(2,2;2,2)$.

\section{Supporting hyperplane of the MDPDL set in the $(2,2;2,2)$ Bell scenario and its quantum violation}
\label{app:MDPDL-ineq}
In this section, we derive a supporting hyperplane of the MDPDL set in the $(2,2;2,2)$ Bell scenario-i.e., an MDPDL inequality-and demonstrate its quantum violation based on the results discussed in the previous sections.

\begin{theorem}
 For any MD parameters $0 < l \leq h$ and any PD parameter $0 \leq \epsilon < 1$, the following inequality holds for any behavior in the $(l,h,\epsilon,\epsilon)$-MDPDL set:
    \begin{equation}\label{eq:mdpd_ineq}
		l(1-\epsilon) \; \left[p_{ABXY}(0000)+\epsilon p_{ABXY}(1100) -\epsilon p_{XY}(00) \right]- h \; \left[p_{ABXY}(0101) + p_{ABXY}(1010) + p_{ABXY}(0011) \right] \leq 0.
	\end{equation}
This MDPDL inequality can be violated by the quantum strategy defined in Eq.~\eqref{eq:q_strategy} along with any input distribution $\{p_{XY}\}$ satisfying $0<l\leq p_{XY}(xy)\leq h,\forall x,y.$
\end{theorem}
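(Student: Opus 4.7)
The plan is to reduce the MDPDL inequality~\eqref{eq:mdpd_ineq} to a pointwise (in $\lambda$) statement by using the hidden-variable decomposition of an MDPDL behavior, and then to combine the $\epsilon$-PD facet proved in App.~\ref{app:facet_pd} with the MD bounds $l\leq p_{XY|\Lambda}(xy|\lambda)\leq h$. Writing
\[
p_{ABXY}(abxy)=\int d\lambda\,q_\Lambda(\lambda)\,p_{XY|\Lambda}(xy|\lambda)\,p_{AB|XY\Lambda}(ab|xy\lambda),
\]
the task reduces to showing that the integrand of the left-hand side of~\eqref{eq:mdpd_ineq} is non-positive for every $\lambda$. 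Since $p_{AB|XY\Lambda=\lambda}$ satisfies the $\epsilon$-PD constraints by hypothesis, the facet inequality~\eqref{eq:pd_ineq} gives
\[
(1-\epsilon)\bigl[p_{AB|XY\Lambda}(00|00\lambda)+\epsilon\,p_{AB|XY\Lambda}(11|00\lambda)-\epsilon\bigr]\leq S(\lambda),
\]
where $S(\lambda):=p_{AB|XY\Lambda}(01|01\lambda)+p_{AB|XY\Lambda}(10|10\lambda)+p_{AB|XY\Lambda}(00|11\lambda)\geq 0$.

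Denote $B(\lambda):=p_{AB|XY\Lambda}(00|00\lambda)+\epsilon\,p_{AB|XY\Lambda}(11|00\lambda)-\epsilon$, and split the argument on the sign of $B(\lambda)$. If $B(\lambda)\geq 0$, the first block of the integrand is non-negative, so the MD upper bound $p_{XY|\Lambda}(00|\lambda)\leq h$ yields the upper estimate $l(1-\epsilon)\,p_{XY|\Lambda}(00|\lambda)\,B(\lambda)\leq lh(1-\epsilon)B(\lambda)\leq lh\,S(\lambda)$ after invoking the PD facet. For the second block, the MD lower bound $p_{XY|\Lambda}(xy|\lambda)\geq l$ together with non-negativity of the conditional probabilities gives $-h\sum p_{XY|\Lambda}(xy|\lambda)\,p_{AB|XY\Lambda}(\cdot|xy\lambda)\leq -hl\,S(\lambda)$. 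The two contributions cancel, so the integrand is $\leq 0$. If instead $B(\lambda)<0$, then the first block is already non-positive (since $l(1-\epsilon)\,p_{XY|\Lambda}(00|\lambda)\geq 0$) and the second block is non-positive by construction, so the sum is again $\leq 0$. Integrating against the non-negative weight $q_\Lambda(\lambda)$ completes the first part.

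For the quantum violation, I would use the correlations yielded by the qubit strategy of Eq.~\eqref{eq:q_strategy} analysed in the preceding lemma, which already satisfy $p_{AB|XY}(01|01)=p_{AB|XY}(10|10)=p_{AB|XY}(00|11)=0$ together with $p_{AB|XY}(00|00)+\epsilon\,p_{AB|XY}(11|00)>\epsilon$ for every $\epsilon\in[0,1)$. Multiplying by any input distribution $p_{XY}$ with $p_{XY}(xy)\geq l>0$, the left-hand side of~\eqref{eq:mdpd_ineq} collapses to $l(1-\epsilon)\,p_{XY}(00)\bigl[p_{AB|XY}(00|00)+\epsilon\,p_{AB|XY}(11|00)-\epsilon\bigr]$, which is strictly positive, violating the inequality for any MD parameter pair $0<l\leq h$.

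The main subtlety I expect is the asymmetric use of MD bounds in the two blocks: the first block requires the \emph{upper} bound $h$ on $p_{XY|\Lambda}$ and the second requires the \emph{lower} bound $l$, even though the MDPDL inequality's overall prefactors are $l(1-\epsilon)$ and $h$ respectively. The case split on the sign of $B(\lambda)$ is precisely what makes these opposing bounds compatible, and verifying that the resulting $lh\,S(\lambda)$ and $-hl\,S(\lambda)$ contributions cancel exactly is the crux of the reduction; once this is in place, the quantum part is immediate from the correlations already computed in the previous lemma.
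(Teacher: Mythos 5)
Your proof is correct and rests on the same two ingredients as the paper's: the $\epsilon$-PD facet inequality \eqref{eq:pd_ineq} of App.~\ref{app:facet_pd} and the MD bounds $l \le p_{XY|\Lambda}(xy|\lambda) \le h$. The organisational difference is that you argue pointwise in $\lambda$, showing the integrand is non-positive for every hidden-variable value, whereas the paper argues on the extremal points of the MDPDL polytope; both reductions are valid, and yours is slightly more elementary in that it does not invoke the vertex characterisation for this half of the theorem at all --- it only needs the observation that for each fixed $\lambda$ the product behaviour $p_{A|XY\Lambda=\lambda}\,p_{B|XY\Lambda=\lambda}$ lies in $\mathcal{P}_2^{AB,(\epsilon,\epsilon)}$ and hence obeys \eqref{eq:pd_ineq}. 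More substantively, your explicit case split on the sign of $B(\lambda)$ repairs a step the paper leaves implicit: the paper's first displayed inequality, which replaces $p^{ext}_{XY}(00)$ by $h$ in front of the bracket $\left[p^{ext}_{AB|XY}(00|00)+\epsilon\, p^{ext}_{AB|XY}(11|00)-\epsilon\right]$, is only valid when that bracket is non-negative; for a vertex with, say, $p^{ext}_{AB|XY}(00|00)=p^{ext}_{AB|XY}(11|00)=0$ the bracket equals $-\epsilon<0$ and that intermediate bound fails, although the final conclusion survives because the whole expression is then manifestly non-positive --- which is exactly the second branch of your case analysis. The quantum-violation half of your argument is identical to the paper's, reading off the three vanishing probabilities and the strict inequality $p_{AB|XY}(00|00)+\epsilon\, p_{AB|XY}(11|00)>\epsilon$ from the preceding lemma.
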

\begin{proof}

We first explain why the upper bound of this inequality Eq.~\eqref{eq:mdpd_ineq} over the MDPDL correlation set is 0.
    
As shown in App.~\ref{app:MDPDL-polytope}, 
any extremal point $p^{ext}_{ABXY}$ of the MDPDL set can be written as $p^{ext}_{ABXY}(abxy) = p^{ext}_{AB|XY}(ab|xy) \cdot p^{ext}_{XY}(xy)$, where $p^{ext}_{AB|XY}$ satisfies the $(\epsilon,\epsilon)$-PD constraints in Eq.~\eqref{eq:v_pd} and $p^{ext}_{XY}$ satisfies the $(l,h)$-MD constraints: $l \leq p^{ext}_{XY}(xy) \leq h$ for any $x,y$. For any such vertex $p^{ext}_{ABXY}$, the left-hand side of Eq.~\eqref{eq:mdpd_ineq} becomes:
\begin{equation}
    \begin{split}
		&l(1-\epsilon)\left[p^{ext}_{AB|XY}(00|00)+\epsilon \;p^{ext}_{AB|XY}(11|00) -\epsilon \right] p^{ext}_{XY}(00)\\
        & - h\left[p^{ext}_{AB|XY}(01|01) \; p^{ext}_{XY}(01) + p^{ext}_{AB|XY}(10|10)\; p^{ext}_{XY}(10) + p^{ext}_{AB|XY}(00|11) \; p^{ext}_{XY}(11)\right] \\
        & \leq lh(1-\epsilon)\left[p^{ext}_{AB|XY}(00|00)+\epsilon \; p^{ext}_{AB|XY}(11|00) -\epsilon \right]
         - lh\left[p^{ext}_{AB|XY}(01|01) + p^{ext}_{AB|XY}(10|10) + p^{ext}_{AB|XY}(00|11)\right]\\
         & \leq 0
	\end{split}
\end{equation}
where the first inequality follows from the MD-$(l,h)$ constraints and the second from the facet inequality Eq.~\eqref{eq:pd_ineq} derived in App.~\ref{app:facet_pd}.
Now, consider the two-qubit quantum strategy defined in Eq.~\eqref{eq:q_strategy} for any $\epsilon \in [0,1)$. This strategy produces a strict violation of the MDPDL inequality Eq.~\eqref{eq:mdpd_ineq} for any $0 < l \leq h$. Evaluating the left-hand side of Eq.~\eqref{eq:mdpd_ineq} under this strategy yields:
\begin{equation}
\begin{split}
    & l \; p^{ext}_{XY}(00)\left(1-\epsilon\right)\left[\frac{(4\epsilon+5)\sqrt{4\epsilon+5}-(12\epsilon+11)}{2(1+\epsilon)}-\epsilon\right]\\
    & \geq l^2\left(1-\epsilon\right)\left[\frac{(4\epsilon+5)\sqrt{4\epsilon+5}-(12\epsilon+11)}{2(1+\epsilon)}-\epsilon\right]>0,
\end{split}
\end{equation}
$\forall 0<l\leq h,\; 0\leq \epsilon<1$.
This proves that for any MD parameters $0 < l \leq h$ and any PD parameter $0 \leq \epsilon < 1$, there exists a quantum correlation that lies outside the ($l,h,\epsilon,\epsilon)$-MDPDL correlation set.
\end{proof}
\section{Guessing probability in the CHSH Bell test under leakage of input information}
\label{app:Pg-crosstalk}
    In this section, we derive an analytical expression for the guessing probability of a quantum adversary, Eve, who attempts to guess the output of a fixed measurement setting of one party, Alice, in the CHSH Bell test. In this setting, Alice's input information is partially leaked to the other party, Bob, such that the mutual information between Alice's input $X$ and Bob's output $B$ is $I(X:B) = \kappa$, with $\kappa \in [0,1)$. We begin by explaining how input leakage affects the CHSH Bell test, and then derive Eve's guessing probability under this modified setting.

    \subsection{CHSH Bell test under partial leakage of Alice's input information}
    The CHSH Bell test is defined in the $(2,2;2,2)$ Bell scenario, where Alice and Bob each perform one of two binary-outcome measurements, denoted $A_x$ for $x \in \{0,1\}$ and $B_y$ for $y \in \{0,1\}$. The CHSH Bell expression is given by:
    \begin{equation}\label{eq:chsh}
    I_{\text{CHSH}} := \langle A_0 B_0 \rangle + \langle A_0 B_1 \rangle + \langle A_1 B_0 \rangle - \langle A_1 B_1 \rangle,
    \end{equation}
    where the correlators are defined as $\langle A_x B_y \rangle := \sum_{a,b \in \{0,1\}} (-1)^{a+b} \cdot p_{AB|XY}(a,b|A_x,B_y)$. It is important to note that in this $(2,2;2,2)$ Bell scenario, the classical bound under input leakage from Alice to Bob is equivalent to the classical bound in the PD-$(\epsilon_A, \epsilon_B)$ model with $\epsilon_A = 0$ and $\epsilon_B = \kappa$. In App.~\ref{app:facet_pd}, we listed all extremal classical strategies in this setting, we use these to verify that the classical bound for this inequality is $S_{c,\kappa} := 2 + 2\kappa$.
Nevertheless, for quantum correlations, when Alice's input $x \in \{0,1\}$ is leaked to Bob, Bob may adapt his measurement depending on this information. That is, Bob's observables become $B_{y,x}$, and the CHSH expression in Eq.~\eqref{eq:chsh} is modified to:
    \begin{equation}
        S_{\text{CHSH}} = \langle A_0 B_{0,0} \rangle + \langle A_0 B_{1,0} \rangle + \langle A_1 B_{0,1} \rangle - \langle A_1 B_{1,1} \rangle.
    \end{equation}
    Following the approach of~\cite{SPM13}, the condition that Alice's input is only partially leaked to Bob (i.e., $I(X:B) = \kappa$ with $\kappa \in [0,1)$) can be translated into the following constraint:
    \begin{equation}
        -\kappa\mathds{1} \leq \Pi_{y,x=0}^{b}-\Pi_{y,x=1}^{b}\leq \kappa\mathds{1},\;\forall b,y\in\{0,1\}
    \end{equation}
    where $\Pi_{y,x}^{b}$ are POVM elements corresponding to $B_{y,x}$, satisfying $\sum_{b \in {0,1}} \Pi_{y,x}^{b} = \mathds{1}$, and the observables are defined as $B_{y,x} = \sum_{b \in {0,1}} (-1)^b \Pi_{y,x}^{b}$.
    This leads to the following constraint on the operator norm of the difference between observables:
    \begin{equation}
        \|B_{y,x=0}-B_{y,x=1}\| \leq 2\kappa,\; \forall y\in\{0,1\}.
    \end{equation}
Putting everything together, the CHSH Bell inequality under input leakage becomes:
    \begin{equation}\label{eq:s_chsh}
        \begin{split}
            S_{\text{CHSH}} & = \langle A_0 B_{0,0} \rangle + \langle A_0 B_{1,0} \rangle + \langle A_1 B_{0,1} \rangle - \langle A_1 B_{1,1} \rangle \leq 2+2\kappa,\\
            &\|B_{y,x=0}-B_{y,x=1}\| \leq 2\kappa,\; \forall y\in\{0,1\}.
        \end{split}
    \end{equation}
    
     \begin{theorem}
     The quantum value of the CHSH Bell inequality under $\kappa$ bits of input leakage, as defined in Eq.~\eqref{eq:s_chsh}, is given by
        \begin{equation}
        S_{q,\kappa}=\begin{cases}
        2\sqrt{2}(\kappa+\sqrt{1-\kappa^2}),\; & \text{ when } \kappa\in[0,\frac{1}{\sqrt{2}}];\\
        4, & \text{ when } \kappa\in[\frac{1}{\sqrt{2}},1).
        \end{cases}
        \end{equation}
        This value is achieved by the maximally entangled two-qubit state $|\psi\rangle = \frac{1}{\sqrt{2}}(|00\rangle + |11\rangle)$ together with the following observables:
        \begin{align}
        &\text{ when } \kappa\in[0,\frac{1}{\sqrt{2}}],\quad 
        \begin{cases}
        &A_0=\sigma_z,\; A_1=\sigma_x;\\
        &B_{0,0}=\frac{-\kappa+\sqrt{1-\kappa^2}}{\sqrt{2}}\sigma_x+\frac{\kappa+\sqrt{1-\kappa^2}}{\sqrt{2}}\sigma_z,\\
        &B_{0,1}=\frac{\kappa+\sqrt{1-\kappa^2}}{\sqrt{2}}\sigma_x+\frac{-\kappa+\sqrt{1-\kappa^2}}{\sqrt{2}}\sigma_z,\\
        &B_{1,0}=\frac{\kappa-\sqrt{1-\kappa^2}}{\sqrt{2}}\sigma_x+\frac{\kappa+\sqrt{1-\kappa^2}}{\sqrt{2}}\sigma_z,\\
        &B_{1,1}=\frac{-\kappa-\sqrt{1-\kappa^2}}{\sqrt{2}}\sigma_x+\frac{-\kappa+\sqrt{1-\kappa^2}}{\sqrt{2}}\sigma_z.
        \end{cases}\\
        & \text{ when } \kappa\in[\frac{1}{\sqrt{2}},1),\quad 
        \begin{cases}
        &A_0=\sigma_z,\; A_1=\sigma_x;\\
        &B_{0,0}=\sigma_z,\;B_{0,1}=\sigma_x,\\
        &B_{1,0}=\sigma_z,\;B_{1,1}=-\sigma_x.
        \end{cases}
        \end{align}
    \end{theorem}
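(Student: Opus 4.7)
My plan is to prove the lower and upper bounds on $S_{q,\kappa}$ separately. For the lower bound (achievability), I would substitute the two stated two-qubit strategies into $S_{\mathrm{CHSH}} = \langle A_0 B_{0,0}\rangle + \langle A_0 B_{1,0}\rangle + \langle A_1 B_{0,1}\rangle - \langle A_1 B_{1,1}\rangle$ evaluated on the maximally entangled state $|\phi^+\rangle = (|00\rangle + |11\rangle)/\sqrt{2}$, using the identities $\langle \sigma_i \otimes \sigma_j\rangle_{\phi^+} = \delta_{ij}$ for $i,j \in \{x,z\}$. Each correlator then reduces to the sum of the $xx$ and $zz$ coefficients in the Pauli expansion of the corresponding $B_{y,x}$: in the $\kappa \leq 1/\sqrt{2}$ branch each correlator evaluates to $\pm(\sqrt{1-\kappa^2}+\kappa)/\sqrt{2}$, summing with CHSH signs to $2\sqrt{2}(\sqrt{1-\kappa^2}+\kappa)$; in the $\kappa \geq 1/\sqrt{2}$ branch each is $\pm 1$ and the sum is $4$. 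In parallel I would verify the leakage constraint $\|B_{y,0}-B_{y,1}\| \leq 2\kappa$: in the first branch the differences compute to $\pm\sqrt{2}\kappa(\sigma_z-\sigma_x)$ with norm exactly $2\kappa$, and in the second branch the nonzero difference is $\pm 2\sigma_x$ with norm $2 \leq 2\kappa$ precisely when $\kappa \geq 1/\sqrt{2}$.

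For the upper bound I would invoke Jordan's lemma to reduce to qubit observables, and in each $2 \times 2$ block decompose Bob's observables as $B_{y,x} = \bar B_y + (-1)^x \tilde B_y$ with $\bar B_y = \tfrac{1}{2}(B_{y,0}+B_{y,1})$ and $\tilde B_y = \tfrac{1}{2}(B_{y,0}-B_{y,1})$. The identity $B_{y,x}^2 = I$ forces $\{\bar B_y, \tilde B_y\} = 0$ and $\bar B_y^2 + \tilde B_y^2 = I$, so $\|\bar B_y\|^2 + \|\tilde B_y\|^2 = 1$ in each block, and the leakage constraint becomes $\|\tilde B_y\| \leq \kappa$. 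Split $\mathcal{B} = \mathcal{B}_{\bar B} + \mathcal{B}_{\tilde B}$ with $\mathcal{B}_{\bar B} = A_0(\bar B_0 + \bar B_1) + A_1(\bar B_0 - \bar B_1)$ and $\mathcal{B}_{\tilde B} = A_0(\tilde B_0 + \tilde B_1) - A_1(\tilde B_0 - \tilde B_1)$. Direct computation yields $\mathcal{B}_{\bar B}^2 = 2(\bar B_0^2 + \bar B_1^2) + \{A_0,A_1\}(\bar B_0^2 - \bar B_1^2) - [A_0, A_1][\bar B_0, \bar B_1]$, and combining the qubit identity $\|\{A_0,A_1\}\|^2 + \|[A_0,A_1]\|^2 = 4$ with $\|[\bar B_0, \bar B_1]\| \leq 2\|\bar B_0\|\|\bar B_1\|$ and Cauchy-Schwarz gives the generalised Tsirelson bound $\|\mathcal{B}_{\bar B}\| \leq 2\sqrt{\|\bar B_0\|^2 + \|\bar B_1\|^2}$; analogously $\|\mathcal{B}_{\tilde B}\| \leq 2\sqrt{\|\tilde B_0\|^2 + \|\tilde B_1\|^2}$. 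Setting $T = \|\tilde B_0\|^2 + \|\tilde B_1\|^2 \in [0, 2\kappa^2]$ and using $\|\bar B_0\|^2 + \|\bar B_1\|^2 = 2 - T$, the operator triangle inequality gives $\|\mathcal{B}\| \leq 2(\sqrt{2-T} + \sqrt{T})$. Maximising over $T$, the interior optimum at $T=1$ gives value $4$ (achievable iff $\kappa \geq 1/\sqrt{2}$), while the boundary optimum $T = 2\kappa^2$ gives value $2\sqrt{2}(\sqrt{1-\kappa^2} + \kappa)$ (when $\kappa \leq 1/\sqrt{2}$), matching the two branches of the claim.

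The main technical obstacle will be deriving and carefully justifying the generalised Tsirelson-type bound $\|\mathcal{B}_{\bar B}\| \leq 2\sqrt{\|\bar B_0\|^2 + \|\bar B_1\|^2}$ for unit-norm Alice observables paired with arbitrary-norm Hermitian Bob observables; the textbook Tsirelson inequality presupposes $\pm 1$ observables on both sides. The SOS identity above reduces this, after Jordan reduction, to the tight qubit estimate $2|c_0^2 - c_1^2|\cdot|\cos\phi_A| + 4 c_0 c_1 |\sin\phi_A|\cdot|\sin\phi_B| \leq 2(c_0^2 + c_1^2)$ with $c_y = \|\bar B_y\|$ and $\phi_A, \phi_B$ the angles between the Alice and Bob pairs; this follows from Cauchy-Schwarz applied to the vectors $(2|c_0^2-c_1^2|, 4 c_0 c_1)$ and $(|\cos\phi_A|, |\sin\phi_A|)$ together with the algebraic identity $(c_0^2-c_1^2)^2 + 4 c_0^2 c_1^2 = (c_0^2 + c_1^2)^2$. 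Tightness of the operator triangle inequality $\|\mathcal{B}\| \leq \|\mathcal{B}_{\bar B}\| + \|\mathcal{B}_{\tilde B}\|$ is not a priori guaranteed but follows a posteriori from the achievability step, in which the explicit strategy simultaneously saturates $\|\bar B_y\| = \sqrt{1-\kappa^2}$ and $\|\tilde B_y\| = \kappa$ and aligns the extremal eigenvectors of $\mathcal{B}_{\bar B}$ and $\mathcal{B}_{\tilde B}$ on $|\phi^+\rangle$.
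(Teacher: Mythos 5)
Your proposal is correct in outline and shares the central idea with the paper's proof -- decomposing Bob's observables into the input-averaged part $\bar B_y$ and the input-difference part $\tilde B_y$, splitting the Bell expression into two corresponding pieces, bounding each separately, and noting that the explicit strategy saturates both bounds simultaneously. The routes differ in how that bound is executed: the paper first invokes an external result (its Ref.~[PAC+25]) to reduce to pure two-qubit states, then works entirely with Bloch vectors and the correlation matrix $T_\theta$ of $\cos\theta\ket{00}+\sin\theta\ket{11}$, derives the bound $S\le 2\sqrt{1+\sin^2 2\theta+4\kappa\sqrt{1-\kappa^2}\sin 2\theta}$ by aligning $\vec c_0\pm\vec c_1$ and $\vec d_0\pm\vec d_1$ with the principal axes of $T_\theta$, and only then optimizes over $\theta$. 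Your operator-norm/SOS version makes the state optimization automatic and yields the generalised Tsirelson bound $\|\mathcal{B}_{\bar B}\|\le 2\sqrt{\|\bar B_0\|^2+\|\bar B_1\|^2}$ as a clean intermediate statement, which is arguably more transparent than the paper's angle chase; the paper's version, in exchange, produces as a by-product the $\theta$-dependent trade-off (its Eq.~(D15)/(D19)) that it needs later for the guessing-probability bound, which your route does not give directly.

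Two points need repair. First, the reduction to qubits: Jordan's lemma block-diagonalises Alice's \emph{two} binary observables, but Bob has \emph{four} observables $B_{y,x}$, and there is no pairwise Jordan decomposition that puts all four into common $2\times 2$ blocks. Your argument genuinely needs the qubit structure on Bob's side -- the identities $\bar B_y^2=\|\bar B_y\|^2\,\mathds{1}$ and $\|\bar B_y\|^2+\|\tilde B_y\|^2=1$ fail in higher dimension (e.g.\ $\bar B_y$ can have a unit eigenvector on which $\tilde B_y$ vanishes while $\|\tilde B_y\|=\kappa$ elsewhere, giving $\|\bar B_y\|^2+\|\tilde B_y\|^2=1+\kappa^2$), and without them the scalar bookkeeping $\|\bar B_0\|^2+\|\bar B_1\|^2=2-T$ and the treatment of $\{A_0,A_1\}(\bar B_0^2-\bar B_1^2)$ as a number both break. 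You should either cite the $(2,2;m,2)$ dimension-reduction result the paper uses, or supply an argument that compressing Bob to the rank-$2$ support of his reduced state preserves the constraints you need (note that naive compression destroys $B_{y,x}^2=\mathds{1}$). Second, a small arithmetic slip in the achievability check for $\kappa\ge 1/\sqrt2$: the constrained differences are $B_{y,0}-B_{y,1}=\sigma_z\mp\sigma_x$, of norm $\sqrt2$, not $\pm2\sigma_x$ of norm $2$; the condition $\sqrt2\le 2\kappa$ does give $\kappa\ge 1/\sqrt2$ as you state, so the conclusion stands, but the computation as written would require $\kappa\ge 1$.
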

    The proof of this theorem is provided in \ref{pf:Sq_pg}.
    \subsection{Eve's guessing probability in the CHSH Bell test under partial leakage of Alice's input information}
    \begin{theorem}
    Let $S \in [S_{c,\kappa}, S_{q,\kappa}]$ be the observed CHSH value under $\kappa$ bits of input leakage in the Bell test defined in Eq.~\eqref{eq:s_chsh}. Then the quantum adversary's guessing probability for the outcome of Alice's measurement setting $x^*=0$ is given by:
        \begin{equation}
        {P}_{g}^{(q)}(S,\kappa,x^*=0)= \begin{cases}
        \overline{P}_{g}^{(q)}(S,\kappa,x^*=0), & \text{ when } S\in[S_{\kappa}^*,S_{q,\kappa}];\\
        \frac{\overline{P}_{g}^{(q)}(S=S_{\kappa}^*,\kappa,x^*=0)-1}{S_{\kappa}^*-S_{c,\kappa}} S+\frac{S_{\kappa}^*-S_{c,\kappa}\overline{P}_{g}^{(q)}(S=S_{\kappa}^*,\kappa,x^*=0)}{S_{\kappa}^*-S_{c,\kappa}}, & \text{ when } S\in[S_{c,\kappa},S_{\kappa}^*].\\
        \end{cases}
        \end{equation}
        Here, the function $\overline{P}{g}^{(q)}(S,\kappa,x^*=0)$ is given by:
        \begin{equation}
            \overline{P}{g}^{(q)}(S,\kappa,x^*=0)=\begin{cases}
            \frac{1}{2}+\frac{1}{4}\sqrt{4-\big( \sqrt{S^2-(2-4\kappa^2)^2} -4\kappa\sqrt{1-\kappa^2}\big)^2}, & \text{ when } \kappa\in[0,\frac{1}{\sqrt{2}}];\\
            \frac{1}{2}+\frac{1}{4}\sqrt{S(4-S)}, & \text{ when } \kappa\in[\frac{1}{\sqrt{2}},1).
        \end{cases}
        \end{equation}
        The transition point $S_{\kappa}^*$ is the unique real solution in the interval $[S_{c,\kappa},S_{q,\kappa}]$ of the equation:
        \begin{equation}
            \frac{\partial \overline{P}_{g}^{(q)}(S,\kappa,x^*=0)}{\partial S}\left(S-S_{c, \kappa}\right)=\overline{P}_{g}^{(q)}(S,\kappa,x^*=0)-1.
        \end{equation}
    \end{theorem}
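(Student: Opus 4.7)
The plan is to split the analysis into two stages: first, compute a single-strategy bound $\overline{P}_g^{(q)}(\beta_{obs},\kappa,x^*=0)$, defined as the maximum of $\max_a p_{A|X}(a|x^*=0)$ over quantum behaviors $\mathbf{p}\in\mathcal{Q}^{\mathrm{bip},\kappa}$ attaining $\beta(\mathbf{p})=\beta_{obs}$; second, convexify against the classical deterministic point $(\beta_{c,\kappa},1)$ to obtain $P_g^{(q)}$ as the upper concave envelope on $[\beta_{c,\kappa},\beta_{q,\kappa}]$.

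For Stage 1, I would follow the Pironio--Ac\'in--Massar dilation argument. Purifying Eve's marginal and applying Jordan's lemma to the pair $\{A_0,A_1\}$ I block-diagonalize Alice's observables into qubit blocks, and I apply the same reduction separately to each pair $\{B_{0,x},B_{1,x}\}$ on Bob's side. The leakage constraint $\|B_{y,0}-B_{y,1}\|\leq 2\kappa$ becomes a chord-length bound between Bloch vectors. Within a qubit block I would use an $\mathrm{SU}(2)$ rotation to place all six observables into a common plane, reducing the problem to a trigonometric extremization of $p_{A|X}(a|x^*=0)$ subject to CHSH value equal to $\beta_{obs}$ and the chord bound. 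Lagrange-multiplier analysis then yields the closed forms in the statement. The two branches correspond to whether the chord constraint is active at the optimum: for $\kappa\leq 1/\sqrt{2}$ it is, giving the expression involving $\sqrt{1-\kappa^2}$ and the inner square root; for $\kappa\geq 1/\sqrt{2}$ it is slack, the algebraic bound $\beta_{q,\kappa}=4$ becomes reachable, and the formula collapses to the symmetric $\tfrac12+\tfrac14\sqrt{\beta_{obs}(4-\beta_{obs})}$.

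For Stage 2, I would observe that the definition in \eqref{eq:gpi} allows convex combinations of behaviors with weights $q_a$; after Eve relabels her outcomes optimally the objective becomes $\sum_a q_a\,\overline{P}_g^{(q)}(\beta(\mathbf{p}^a),\kappa)$, whose maximum at $\beta_{obs}$ is the upper concave envelope of $\overline{P}_g^{(q)}(\cdot,\kappa)$ together with the deterministic endpoint $(\beta_{c,\kappa},1)$. A routine second-derivative computation shows that $\overline{P}_g^{(q)}$ is strictly concave on the interior of $[\beta_{c,\kappa},\beta_{q,\kappa}]$, with $\overline{P}_g^{(q)}(\beta_{c,\kappa},\kappa)<1$ for $\kappa>0$. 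Hence the envelope consists of a straight segment from $(\beta_{c,\kappa},1)$ tangent to the curve at a unique point $\beta_\kappa^*$, beyond which it coincides with $\overline{P}_g^{(q)}$. Equating the slope of this tangent line with $\partial_\beta\overline{P}_g^{(q)}$ at the tangency point gives
\begin{equation*}
\frac{\partial\overline{P}_g^{(q)}}{\partial\beta}(\beta_\kappa^*,\kappa,x^*=0)\,\bigl(\beta_\kappa^*-\beta_{c,\kappa}\bigr)=\overline{P}_g^{(q)}(\beta_\kappa^*,\kappa,x^*=0)-1,
\end{equation*}
the equation for $\beta_\kappa^*$ stated in the theorem; uniqueness follows from strict concavity. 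The explicit linear expression on $[\beta_{c,\kappa},\beta_\kappa^*]$ is then just the chord through $(\beta_{c,\kappa},1)$ and $(\beta_\kappa^*,\overline{P}_g^{(q)}(\beta_\kappa^*,\kappa,x^*=0))$. At $\kappa=0$ one verifies $\beta_0^*=\beta_{c,0}=2$, the linear piece collapses to a point, and the formula recovers the CHSH result of \cite{PAM+10}.

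The main obstacle is Stage 1: with four of Bob's observables coupled by two operator-norm constraints, the reduction to a common planar qubit geometry is appreciably more delicate than in the standard CHSH derivation, since one must rule out non-coplanar optima within a single Jordan block \emph{and} argue that a direct sum of differently oriented blocks cannot beat the coplanar ansatz. Once this geometric reduction is in place, the remaining extremization is a routine Lagrange-multiplier exercise whose dichotomy is governed by whether the multiplier dual to the chord bound vanishes, and Stage 2 is a short convex-analysis step whose only input is the strict concavity of $\overline{P}_g^{(q)}$.
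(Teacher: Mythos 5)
Your two-stage architecture is the same as the paper's: first obtain a single-strategy trade-off curve $\overline{P}_{g}^{(q)}(S,\kappa)$ between $\langle A_0\rangle$ and the leaky CHSH value, then take the upper concave envelope of that curve together with the classical point $(S_{c,\kappa},1)$; your Stage~2 is essentially identical to the paper's (and your strict-concavity argument for uniqueness of the tangency point $S_\kappa^*$ is a cleaner justification than the paper's bare assertion). The problem is Stage~1, where you yourself flag the load-bearing step as unresolved: Jordan's lemma applied ``separately to each pair $\{B_{0,x},B_{1,x}\}$'' produces two block decompositions of Bob's algebra that need not be simultaneously refinable, so you have no common qubit-block structure for all four of Bob's observables, no argument that optima within a block are coplanar, and no argument that a direct sum of differently oriented blocks cannot win. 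Without that, neither the closed forms for $\overline{P}_{g}^{(q)}$ nor the branch point at $\kappa=1/\sqrt{2}$ is established; ``Lagrange-multiplier analysis then yields the closed forms'' is a placeholder for the entire analytic content of the theorem.

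The paper closes exactly this gap by a different route. It first reduces the fixed-setting guessing probability to a classical-adversary problem (App.~A of \cite{RLWP25}), so that only the monogamy functionals $c_1\langle A_0\rangle + c_2 S_{\mathrm{CHSH}}$ need to be maximized; it then invokes the result of \cite{PAC+25} that in $(2,2;m,2)$ scenarios such linear functionals attain their quantum maxima on pure two-qubit states, treating the leakage scenario as a $(2,2;4,2)$ scenario with the extra operator-norm constraints $\|B_{y,0}-B_{y,1}\|\leq 2\kappa$. Within two qubits it does not use Lagrange multipliers but the substitution $\vec{c}_y=\tfrac12(\vec{b}_{y,0}+\vec{b}_{y,1})$, $\vec{d}_y=\tfrac12(\vec{b}_{y,0}-\vec{b}_{y,1})$ (with $\vec c_y\cdot\vec d_y=0$, $\|\vec c_y\|^2+\|\vec d_y\|^2=1$, $\|\vec d_y\|\le\kappa$), splits the objective into a $\vec c$-part and a $\vec d$-part, and aligns both with the top singular directions of the correlation matrix $T$; combining the resulting bound $S\le f(\theta,\kappa)$ with $\langle A_0\rangle\le\cos 2\theta$ yields $\overline{P}_{g}^{(q)}$. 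If you want to pursue your version, you must either prove the simultaneous qubit reduction for the constrained four-observable problem or import a dimension-reduction result of the \cite{PAC+25} type; note also that the paper ultimately cross-checks tightness of the final piecewise formula against level-3 NPA upper bounds rather than resting the equality entirely on the analytic derivation.
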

   \begin{proof}\label{pf:Sq_pg}
    To derive Eve's guessing probability for the outcome of a fixed measurement setting of Alice, we assume without loss of generality that the adversary aims to guess the outcome corresponding to the measurement setting $A_0$. It is known that when the adversary's goal is to guess the outcome of a fixed measurement, the optimal guessing probability for a quantum adversary is identical to that for a classical one (see App.~A of~\cite{RLWP25} for a detailed explanation). Therefore, the guessing probability in the CHSH Bell test with input leakage can be derived from the monogamy relation between the observable $A_0$ and the modified CHSH expression $S_{\text{CHSH}}$ in Eq.~\eqref{eq:s_chsh}. In general, this monogamy relation can be written as:
    \begin{equation}
        \begin{split}
        I_{\text{MONO}}& :=c_1\langle A_0\rangle+c_2 \big (\langle A_0 B_{0,0} \rangle + \langle A_0 B_{1,0} \rangle + \langle A_1 B_{0,1} \rangle - \langle A_1 B_{1,1} \rangle\big),\\
            &\|B_{y,x=0}-B_{y,x=1}\| \leq 2\kappa,\; \forall y\in\{0,1\}.
        \end{split}
    \end{equation}
    for some coefficients $c_1, c_2 \in \mathbb{R}$.
It has been recently shown in~\cite{PAC+25} that in the $(2,2;m,2)$ Bell scenario for any $m \geq 2$, the maximal quantum violation can always be achieved using a two-qubit system, and moreover, the optimal state can be taken to be pure. Without loss of generality, we assume the state that achieves the maximal quantum value of the monogamy relation is $|\psi\>=\cos\theta |00\>+\sin \theta |11\>$ for some $\theta\in(0,\frac{\pi}{4}]$, the corresponding density matrix is then $\rho:=|\psi\>\<\psi|$: 
    \begin{equation}\label{eq:rho}
    \rho=\frac{1}{4}\left(\mathds{1} \otimes \mathds{1}+\cos2\theta\sigma_z\otimes \mathds{1}+\cos2\theta \mathds{1} \otimes \sigma_z +\sum_{i,j} T_{ij} \sigma_{i} \otimes \sigma_{j}\right)
    \end{equation}
    where the $3\times3$ correlation matrix $T$ satisfies $T_{xx} = \sin2\theta$, $T_{yy} = -\sin2\theta$, $T_{zz} = 1$, and all other entries are zero.
Without loss of generality, the observables $A_x$ and $B_{y,x}$ can be parameterized by unit Bloch vectors as:
    \begin{equation}
        A_x=\vec{a}_x \cdot \vec{\sigma},\; B_{y,x}= \vec{b}_{y,x} \cdot \vec{\sigma},\;\text{ with } \|\vec{a}_x\|=1,\; \|\vec{b}_{y,x}\|=1.
    \end{equation}
    where $\vec{\sigma}:=[\sigma_x,\sigma_y,\sigma_z]$. For this state, the expectation value of the observable $A_0$ satisfies the upper bound:
    \begin{equation}\label{eq_a0}
        \<A_0\>\leq \cos2\theta,
    \end{equation}
    and the maximal value of the modified CHSH expression $S_{\text{CHSH}}$ is given by:
    \begin{equation}\label{eq:s_up1}
    S_{\text{CHSH}}=\max_{\substack{\vec{a}_0,\vec{a}_1,\vec{b}_{0,0},\vec{b}_{0,1},\vec{b}_{1,0},\vec{b}_{1,1}, \\\|\vec{a}_0\|=1,\; \|\vec{a}_1\|=1, \\\|\vec{b}_{0,0}-\vec{b}_{0,1}\|\leq 2\kappa, \\ \|\vec{b}_{1,0}-\vec{b}_{1,1}\|\leq 2\kappa.}}\;
    \vec{a}_0\cdot T(\vec{b}_{0,0}+\vec{b}_{1,0})+\vec{a}_1\cdot T(\vec{b}_{0,1}-\vec{b}_{1,1}).
    \end{equation}
   Now, for $y \in \{0,1\}$, define the vectors:
    \begin{equation}
        \vec{c}_y=\frac{1}{2}(\vec{b}_{y,0}+\vec{b}_{y,1}),\;  \vec{d}_y=\frac{1}{2}(\vec{b}_{y,0}-\vec{b}_{y,1}).
    \end{equation}
    Then these vectors satisfy:
    \begin{equation}\label{eq:con_c_d}
        \vec{c}_y\cdot\vec{d}_y=0,\; \|\vec{c}_y\|^2+\|\vec{d}_y\|^2=1,\; \text{ and } \|\vec{d}_y\|\leq \kappa,\forall y\in\{0,1\}.
    \end{equation}
    Using this, the optimization problem in Eq.~\eqref{eq:s_up1} becomes:
    \begin{equation}\label{eq:s_up2}
    \begin{split}    S=&\max_{\substack{\vec{a}_0,\vec{a}_1,\vec{c}_{0},\vec{c}_{1},\vec{d}_{0},\vec{d}_{1}, \\ \|\vec{a}_0\|=1,\; \|\vec{a}_1\|=1, \\ \vec{c}_0\cdot\vec{d}_0=0,\; \|\vec{c}_0\|^2+\|\vec{d}_0\|^2=1, \\ \vec{c}_1\cdot\vec{d}_1=0,\; \|\vec{c}_1\|^2+\|\vec{d}_1\|^2=1,\\
    \|\vec{d}_0\|\leq \kappa,\; \|\vec{d}_1\|\leq \kappa.}}\;
    \vec{a}_0\cdot T(\vec{c}_{0}+\vec{c}_{1}+\vec{d}_{0}+\vec{d}_{1})+\vec{a}_1\cdot T(\vec{c}_{0}-\vec{c}_{1}-\vec{d}_{0}+\vec{d}_{1})\\
     \leq &\max_{\substack{\vec{a}_0,\vec{a}_1,\vec{c}_{0},\vec{c}_{1},\\\|\vec{a}_0\|=1,\; \|\vec{a}_1\|=1,\\ \sqrt{1-\kappa^2}\leq \|\vec{c}_0\|\leq 1,\\ \sqrt{1-\kappa^2}\leq \|\vec{c}_1\|\leq 1.}}\;  \vec{a}_0\cdot T(\vec{c}_{0}+\vec{c}_{1})+\vec{a}_1\cdot T(\vec{c}_{0}-\vec{c}_{1}) + \max_{\substack{\vec{a}_0,\vec{a}_1,\vec{d}_{0},\vec{d}_{1},\\
    \\ \|\vec{a}_0\|=1,\; \|\vec{a}_1\|=1,\\\|\vec{d}_0\|\leq \kappa,\; \|\vec{d}_1\|\leq \kappa.
    }}\; \quad \;\;\; \vec{a}_0\cdot T(\vec{d}_{0}+\vec{d}_{1})+\vec{a}_1\cdot T(\vec{d}_{1}-\vec{d}_{0})\\
    \leq & \max_{\substack{\vec{c}_{0},\vec{c}_{1},\\ \sqrt{1-\kappa^2}\leq \|\vec{c}_0\|\leq 1,\\ \sqrt{1-\kappa^2}\leq \|\vec{c}_1\|\leq 1.}}\;  \|T(\vec{c}_{0}+\vec{c}_{1})\|+\|T(\vec{c}_{0}-\vec{c}_{1})\| + \max_{\substack{\vec{d}_{0},\vec{d}_{1},\\
    \\\|\vec{d}_0\|\leq \kappa,\; \|\vec{d}_1\|\leq \kappa.
    }}\; \quad \;\;\; \| T(\vec{d}_{0}+\vec{d}_{1})\|+\|T(\vec{d}_{1}-\vec{d}_{0})\|.\\
    \end{split}
    \end{equation}
    In the first inequality of Eq.~\eqref{eq:s_up2}, the optimization is split into two separate problems. This inequality can be saturated when the constraints in Eq.~\eqref{eq:con_c_d} hold, and when the optimal directions $\vec{a}_x^{(1)}$ and $\vec{a}_x^{(2)}$ from both parts of the optimization coincide, i.e., $\vec{a}_x^{(1)} = \vec{a}_x^{(2)}$ for all $x \in \{0,1\}$.
In the second inequality of Eq.~\eqref{eq:s_up2}, the optimal directions are chosen as:
\begin{equation}
\begin{split}
    \vec{a}_0^{*(1)}=\frac{T(\vec{c}_{0}+\vec{c}_{1})}{\|T(\vec{c}_{0}+\vec{c}_{1})\|},\;  \vec{a}_1^{*(1)}=\frac{T(\vec{c}_{0}-\vec{c}_{1})}{\|T(\vec{c}_{0}-\vec{c}_{1})\|};\\
    \vec{a}_0^{*(2)}=\frac{T(\vec{d}_{0}+\vec{d}_{1})}{\|T(\vec{d}_{0}+\vec{d}_{1})\|},\;  \vec{a}_1^{*(2)}=\frac{T(\vec{d}_{1}-\vec{d}_{0})}{\|T(\vec{d}_{1}-\vec{d}_{0})\|}.
\end{split}
\end{equation}
Let $\vec{t}_1$ and $\vec{t}_2$ denote the eigenvectors of $T$ corresponding to its largest and second-largest eigenvalues, respectively. Then, saturation of Eq.~\eqref{eq:s_up2} requires:
\begin{equation}\label{eq:con_paral}
\begin{split}
    (\vec{c}_{0}+\vec{c}_{1}) / \! / (\vec{d}_{0}+\vec{d}_{1}) / \! / \vec{t}_1;\\
    (\vec{c}_{0}-\vec{c}_{1}) / \! / (\vec{d}_{1}-\vec{d}_{0}) / \! / \vec{t}_2.
\end{split}
\end{equation}
For the state in Eq.~\eqref{eq:rho}, $\vec{t}_1$ is aligned with the $z$-axis and $\vec{t}_2$ lies in the $x$-$y$ plane. Without loss of generality, assume $\vec{t}_2$ is aligned along the $x$-axis. Therefore, parameterizing $\vec{b}_{y,x}$ in the $x$-$z$ plane as $\vec{b}_{y,x}=[\cos\gamma_{y,x},0,\sin\gamma_{y,x}],\forall x,y\in\{0,1\}$, the conditions in Eq.~\eqref{eq:con_paral} imply:
\begin{equation}
\begin{split}
    \sin\gamma_{0,0}=\sin\gamma_{1,0}=:\sin\alpha,\; \cos\gamma_{0,0}=-\cos\gamma_{1,0}=:-\cos\alpha;\\
    \sin\gamma_{0,1}=\sin\gamma_{1,1}=:\sin\beta,\; \cos\gamma_{0,0}=-\cos\gamma_{1,0}=:-\cos\beta.
\end{split}
\end{equation}
In other words, we have:
\begin{equation}
    \begin{split}
        &\vec{b}_{0,0}=[-\cos\alpha,0,\sin\alpha],\;\vec{b}_{0,1}=[\cos\beta,0,\sin\beta],\;\vec{b}_{1,0}=[\cos\alpha,0,\sin\alpha],\;\vec{b}_{1,1}=[-\cos\beta,0,\sin\beta];\;\\
    \end{split}
\end{equation}
and
\begin{equation}
\begin{split}
            &\vec{c}_0+\vec{c}_1=[0,0,\sin\alpha+\sin\beta],\;\vec{c}_0-\vec{c}_1=[\cos\beta-\cos\alpha,0,0],\;\\
            &\vec{d}_0+\vec{d}_1=[0,0,\sin\alpha-\sin\beta],\; \vec{d}_1-\vec{d}_0=[\cos\alpha+\cos\beta,0,0].
\end{split}
\end{equation}
In addition, the constraints $\|\vec{b}_{0,0}-\vec{b}_{0,1}\|\leq 2\kappa$, and $\|\vec{b}_{1,0}-\vec{b}_{1,1}\|\leq 2\kappa$ yield:
\begin{equation}
    \cos(\alpha+\beta)\leq 2\kappa^2-1.
\end{equation}
Thus, the right-hand side of Eq.~\eqref{eq:s_up2} simplifies to:
\begin{equation}\label{eq:s_up3}
\begin{split}
    S & \leq  \max_{\substack{\alpha,\beta,\\
    \\ \cos(\alpha+\beta)\leq 2\kappa^2-1.
    }}\;2\big(\sin\alpha + \sin2\theta \cos\beta\big)\\
    & \leq \max_{\substack{\gamma,\\\cos{\gamma}\leq 2\kappa^2-1}}\;\max_{\alpha} \; 2\big(\sin\alpha + \sin2\theta \cos(\gamma-\alpha)\big)\\
    &\leq \max_{\substack{\gamma,\\\cos{\gamma}\leq 2\kappa^2-1}}\;2\sqrt{1+\sin^2 2\theta + 2\sin 2\theta\sin\gamma}\\
    & =\begin{cases}
        2\sqrt{1+\sin^2 2\theta + 4\kappa\sqrt{1-\kappa^2}\sin2\theta},\; & \text{ when } \kappa\in[0,\frac{1}{\sqrt{2}}];\\
        2(1+\sin2\theta), & \text{ when } \kappa\in[\frac{1}{\sqrt{2}},1).
    \end{cases}
\end{split}
\end{equation}
Several conclusions can be drawn from this derivation. First, the quantum value of the inequality $S_{\text{CHSH}}$ is obtained by optimizing the parameter $\theta$ for the state $|\psi\rangle$ in Eq.~\eqref{eq:s_up3}. That is, the quantum value of the modified CHSH test in Eq.~\eqref{eq:s_chsh} is given by:
\begin{equation}\label{eq:sq}
    S_{q,\kappa}=\begin{cases}
        2\sqrt{2}(\kappa+\sqrt{1-\kappa^2}),\; & \text{ when } \kappa\in[0,\frac{1}{\sqrt{2}}];\\
        4, & \text{ when } \kappa\in[\frac{1}{\sqrt{2}},1).
    \end{cases}
\end{equation}
This maximum is achieved by the state $|\psi\rangle = \frac{1}{\sqrt{2}}(|00\rangle + |11\rangle)$ and the observables:
\begin{align}
    &\text{ when } \kappa\in[0,\frac{1}{\sqrt{2}}],\quad 
    \begin{cases}
        &A_0=\sigma_z,\; A_1=\sigma_x;\\
        &B_{0,0}=\frac{-\kappa+\sqrt{1-\kappa^2}}{\sqrt{2}}\sigma_x+\frac{\kappa+\sqrt{1-\kappa^2}}{\sqrt{2}}\sigma_z,\\
        &B_{0,1}=\frac{\kappa+\sqrt{1-\kappa^2}}{\sqrt{2}}\sigma_x+\frac{-\kappa+\sqrt{1-\kappa^2}}{\sqrt{2}}\sigma_z,\\
        &B_{1,0}=\frac{\kappa-\sqrt{1-\kappa^2}}{\sqrt{2}}\sigma_x+\frac{\kappa+\sqrt{1-\kappa^2}}{\sqrt{2}}\sigma_z,\\
        &B_{1,1}=\frac{-\kappa-\sqrt{1-\kappa^2}}{\sqrt{2}}\sigma_x+\frac{-\kappa+\sqrt{1-\kappa^2}}{\sqrt{2}}\sigma_z.
    \end{cases}\\
    & \text{ when } \kappa\in[\frac{1}{\sqrt{2}},1),\quad 
    \begin{cases}
        &A_0=\sigma_z,\; A_1=\sigma_x;\\
        &B_{0,0}=\sigma_z,\;B_{0,1}=\sigma_x,\\
        &B_{1,0}=\sigma_z,\;B_{1,1}=-\sigma_x.
    \end{cases}
\end{align}
On the other hand, if the underlying state is a partially entangled state, the observed quantum value of the modified CHSH test in Eq.~\eqref{eq:s_chsh} is given by Eq.~\eqref{eq:s_up3}, while the expectation value $\langle A_0 \rangle$ saturates the bound in Eq.~\eqref{eq_a0}. Combining Eqs.~\eqref{eq_a0} and \eqref{eq:s_up3}, the relation between $\langle A_0 \rangle$ and $S$ is:
\begin{equation}\label{eq:A0_bound}
    \<A_0\>= \begin{cases}
        \frac{1}{2}\sqrt{4-\big( \sqrt{S^2-(2-4\kappa^2)^2} -4\kappa\sqrt{1-\kappa^2}\big)^2}, & \text{ when } \kappa\in[0,\frac{1}{\sqrt{2}}];\\
        \frac{1}{2}\sqrt{S(4-S)}, & \text{ when } \kappa\in[\frac{1}{\sqrt{2}},1).\\
    \end{cases}
\end{equation}
Note that applying the local transformation $A_x \leftrightarrow -A_x, \; B_{y,x} \leftrightarrow -B_{y,x}$ for all $x, y \in \{0,1\}$ leaves the CHSH test in Eq.~\eqref{eq:s_chsh} invariant. Hence, the bound in Eq.~\eqref{eq:A0_bound} also applies to $-\langle A_0 \rangle$:
\begin{equation}\label{eq:negA0_bound}
    -\<A_0\>= \begin{cases}
        \frac{1}{2}\sqrt{4-\big( \sqrt{S^2-(2-4\kappa^2)^2} -4\kappa\sqrt{1-\kappa^2}\big)^2}, & \text{ when } \kappa\in[0,\frac{1}{\sqrt{2}}];\\
        \frac{1}{2}\sqrt{S(4-S)}, & \text{ when } \kappa\in[\frac{1}{\sqrt{2}},1).\\
    \end{cases}
\end{equation}
Therefore, combining the bounds from Eqs.~\eqref{eq:A0_bound} and~\eqref{eq:negA0_bound}, for any $\kappa \in [0,1)$ and any $S \in [S_{c,\kappa}, S_{q,\kappa}]$, where $S_{c,\kappa} = 2 + 2\kappa$ and $S_{q,\kappa}$ is given in Eq.~\eqref{eq:sq}, the guessing probability ${P}_{g}^{(q)}(S,\kappa,x^*=0)$ is bounded by:
\begin{equation}\label{eq:pg_boud1}
\begin{split}
    & {P}_{g}^{(q)}(S,\kappa,x^*=0)= \max_{\substack{\{\bm{p}_{AB|XY}^a\},\; \{q_a\},\\ q_a=0,\forall a,\; \sum_a q_a=1,\\ \{\bm{p}_{AB|XY}^a\}\in\mathcal{Q},\forall a,\\
    \{\sum_a q_a \bm{p}_{AB|XY}^a\} \text{ achieving $S$ for Eq.~\eqref{eq:s_chsh}}.}}\sum_{a} q_a \sum_{b}{p}^a_{AB|XY}(a,b|x^*=0,y) \\
    \geq & \max_{\substack{ \{q_a\},\\ q_a=0,\forall a,\; \sum_a q_a=1.}} \; \bigg( \max_{\substack{ \{\bm{p}_{AB|XY}^{a=0}\}\in\mathcal{Q}\\
    \{\bm{p}_{AB|XY}^{a=0}\} \text{ achieving $S$ for Eq.~\eqref{eq:s_chsh}}.}} \;q_{a=0} \frac{1+\<A_0\>}{2} + \max_{\substack{ \{\bm{p}_{AB|XY}^{a=1}\}\in\mathcal{Q}\\
    \{\bm{p}_{AB|XY}^{a=1}\} \text{ achieving $S$ for Eq.~\eqref{eq:s_chsh}}.}} \;q_{a=1} \frac{1-\<A_0\>}{2}\bigg)\\
    = &  \begin{cases}
        \frac{1}{2}+\frac{1}{4}\sqrt{4-\big( \sqrt{S^2-(2-4\kappa^2)^2} -4\kappa\sqrt{1-\kappa^2}\big)^2}, & \text{ when } \kappa\in[0,\frac{1}{\sqrt{2}}];\\
        \frac{1}{2}+\frac{1}{4}\sqrt{S(4-S)}, & \text{ when } \kappa\in[\frac{1}{\sqrt{2}},1).
        \end{cases}
\end{split}
\end{equation}
Let us denote the expression in the final line of Eq.~\eqref{eq:pg_boud1} as $\overline{P}_{g}^{(q)}(S,\kappa,x^*=0)$. When $S = S_{c,\kappa}$, the system behaves classically, and Eve can guess the outcome of $A_0$ with certainty, i.e., the guessing probability equals $1$. In contrast, the bound derived in Eq.~\eqref{eq:pg_boud1} yields: 
\begin{equation}
\overline{P}_{g}^{(q)}(S=S_{c,\kappa},\kappa,x^*=0) = \begin{cases}
    \frac{1}{2}+\frac{1}{2} \sqrt{1-\left(\sqrt{2 \kappa+5 \kappa^2-4 \kappa^4}-2 \kappa \sqrt{1-\kappa^2}\right)^2}, & \text{ when } \kappa\in[0,\frac{1}{\sqrt{2}}];\\
    \frac{1}{2}+\frac{1}{2} \sqrt{1-\kappa^2}, & \text{ when } \kappa\in[\frac{1}{\sqrt{2}},1).\\
\end{cases}
\end{equation} 
which is strictly less than 1 except when $\kappa=0$. This implies that $\overline{P}_{g}^{(q)}(S,\kappa,x^*=0)$ is not a tight bound on ${P}_{g}^{(q)}(S,\kappa,x^*=0)$; Eve may implement more effective strategies to improve her guessing probability.

One such strategy involves preparing distributions $\{\bm{p}_{AB|XY}^a\}$ and weights $\{q_a\}$ such that some behaviors achieve a Bell value lower than $S$ (e.g., $S_{c,\kappa}$, where Eve can guess with certainty), while others achieve a higher value (e.g., $S_{\kappa}^* > S$). A convex mixture of these strategies can still reproduce the observed value $S$. In this case, the overall guessing probability becomes the corresponding convex combination of 1 and $\overline{P}{g}^{(q)}(S=S_{\kappa}^*,\kappa,x^*=0)$, weighted by $\{q_a\}$.
To determine the best possible guessing probability from this convex-mixture strategy, the optimal value $S_{\kappa}^*$ is found as the maximizer of the following optimization:
\begin{equation}
	 \max_{S\in[S_{c,\kappa},S_{q,\kappa}]} \quad \frac{\overline{P}_{g}^{(q)}(S,\kappa,x^*=0) - 1}{S- S_{c,\kappa}} .
\end{equation}
In other words, $S_{\kappa}^*$ is the unique real solution within $[S_{c,\kappa},S_{q,\kappa}]$ of the equation:
\begin{equation}
    \frac{\partial \overline{P}_{g}^{(q)}(S,\kappa,x^*=0)}{\partial S}\left(S-S_{c, \kappa}\right)=\overline{P}_{g}^{(q)}(S,\kappa,x^*=0)-1.
\end{equation}
Thus, under this convex-mixture strategy, a tighter lower bound on the guessing probability is given by the piecewise function:
\begin{equation}\label{eq:pg_boud2}
    {P}_{g}^{(q)}(S,\kappa,x^*=0)\geq \begin{cases}
    \overline{P}_{g}^{(q)}(S,\kappa,x^*=0), & \text{ when } S\in[S_{\kappa}^*,S_{q,\kappa}];\\
    \frac{\overline{P}_{g}^{(q)}(S=S_{\kappa}^*,\kappa,x^*=0)-1}{S_{\kappa}^*-S_{c,\kappa}} S+\frac{S_{\kappa}^*-S_{c,\kappa}\overline{P}_{g}^{(q)}(S=S_{\kappa}^*,\kappa,x^*=0)}{S_{\kappa}^*-S_{c,\kappa}}, & \text{ when } S\in[S_{c,\kappa},S_{\kappa}^*].\\
\end{cases}
\end{equation}
We remark that when $\kappa = 0$, we recover the well-known result for the guessing probability in the standard CHSH Bell test without any information leakage~\cite{PAM+10}.

A numerical upper bound on the guessing probability can be obtained using the Navascu\'{e}s-Pironio-Ac\'{i}n (NPA) hierarchy of semi-definite programs~\cite{NPA1, NPA2}. Figure~\ref{fig_pg} plots the analytical curve derived in Eq.~\eqref{eq:pg_boud2}, based on our candidate strategy, alongside the upper bound obtained from level-$3$ of the NPA hierarchy. As shown in the figure, the analytical bound closely matches the numerical upper bound, confirming the tightness and optimality of our strategy. The corresponding min-entropy, defined as $H_{\min}(S,\kappa,x^*=0) = -\log_2 \left[{P}_{g}^{(q)}(S,\kappa,x^*=0)\right]$, is shown in Figure~\ref{fig_hmin}.

\end{proof}

\begin{figure}[H]
	\centering
	\subfigure[]{
		\begin{minipage}[t]{0.48\linewidth}
			\centering
			\includegraphics[width=\textwidth]{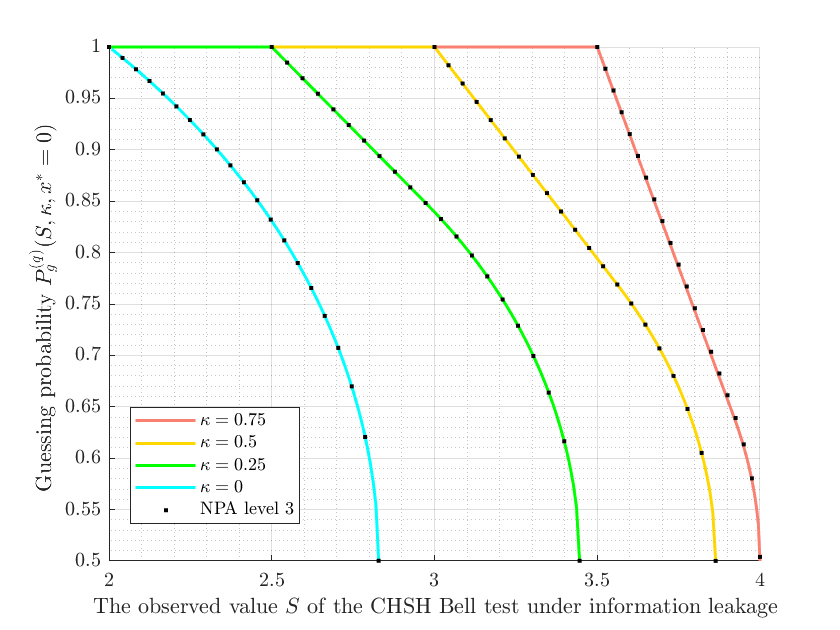}
			\label{fig_pg}
		\end{minipage}%
	}%
	\subfigure[]{
		\begin{minipage}[t]{0.48\linewidth}
			\centering
			\includegraphics[width=\textwidth]{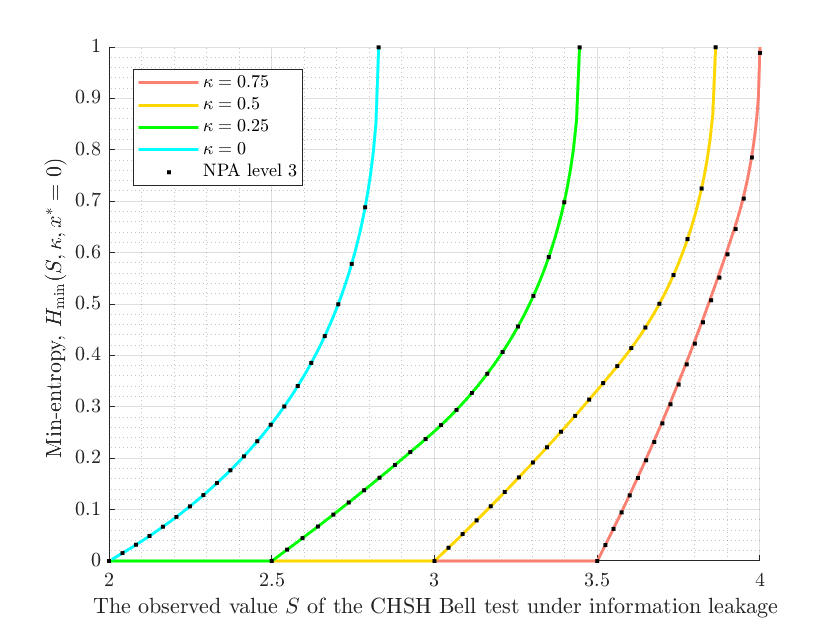}
			\label{fig_hmin}
		\end{minipage}%
	}
	\centering
    \caption{(a) Solid lines: Lower bound of the guessing probability ${P}_{g}^{(q)}(S,\kappa,x^*=0)$ as a function of the observed CHSH value $S$ under $\kappa$ bits of input information leakage, as derived in Eq.~\eqref{eq:pg_boud2}, for $\kappa = 0, 0.25, 0.5, 0.75$.  Black dots: Upper bounds on the guessing probability obtained using the level-3 NPA hierarchy~\cite{NPA1,NPA2}.  (b) Corresponding min-entropy $H_{\min}(S,\kappa,x^*=0) = -\log_2({P}_{g}^{(q)}(S,\kappa,x^*=0))$ as a function of $S$.}
\end{figure}

We also numerically study the scenario in which the input choices of both Alice and Bob are subject to information leakage, quantified by the parameters $\kappa_A$ and $\kappa_B$, respectively. In the figure below Fig.~\ref{fig_pg_k}, we compare the resulting guessing probability in this two-way information-leakage scenario with the one-way information-leakage case analyzed above. For the two-way leakage case, we set $\kappa_A=\kappa_B=\kappa$ and plot the guessing probability as a function of $\kappa$, for a fixed observed CHSH value $S = 2\sqrt{2}$. The corresponding curve is obtained numerically using the NPA hierarchy~\cite{NPA1,NPA2} (codes can be found in~\cite{github}). For the one-way information-leakage case, where only Alice's input information is leaked, i.e., $\kappa_A=\kappa$, $\kappa_B=0$, the guessing probability is given by ${P}_{g}^{(q)}(S=2\sqrt{2},\kappa,x^*=0)$, as derived above.

\begin{figure}[H]
	\centering
		\begin{minipage}[t]{0.6\linewidth}
			\centering
			\includegraphics[width=\textwidth]{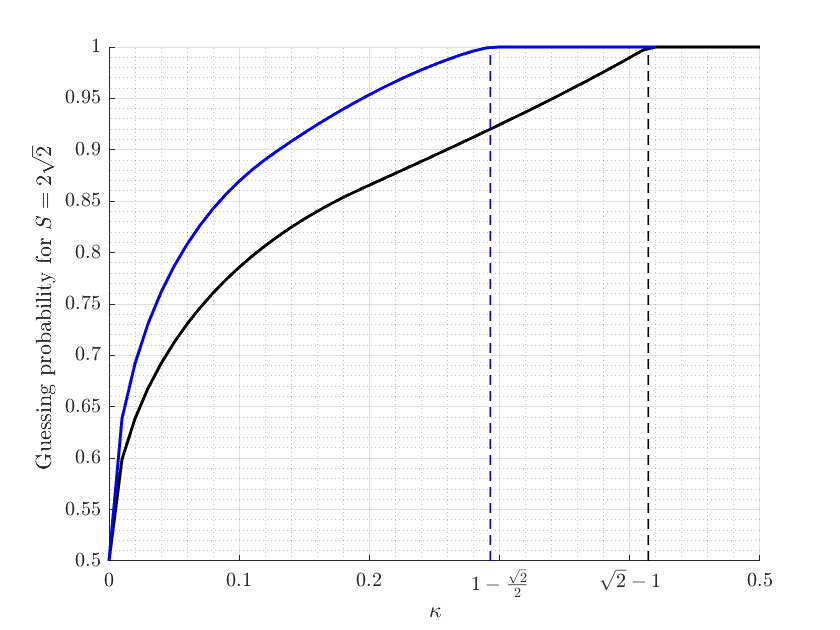}
		\end{minipage}
    \caption{The guessing probability as a function of $\kappa$, which quantifies the amount of input information leakage, for a fixed observed CHSH value $S = 2\sqrt{2}$. The black line corresponds to the one-way information-leakage scenario, where only Alice's input information is leaked, i.e. $\kappa_A=\kappa$, $\kappa_B=0$, and the guessing probability is ${P}_{g}^{(q)}(S=2\sqrt{2},\kappa,x^*=0)$ derived above. The blue line corresponds to the two-way information-leakage scenario, where both Alice's and Bob's input information are leaked, with $\kappa_A=\kappa_B=\kappa$; the associated guessing probability curve is obtained numerically using the NPA hierarchy.}
    \label{fig_pg_k}
\end{figure}

The toolkit Moment~\cite{GA24}, the modeler CVX~\cite{CVX}, and the solver MOSEK~\cite{Mosek} are used for numerical calculation.

\end{document}